\newtheorem{theorem}{Theorem}
\newtheorem{corollary}{Corollary}
\newtheorem{observation}[theorem]{Observation}
\newtheorem{remark}{Remark}
\newtheorem{definition}{Definition}
\title{Schelling Segregation with Strategic Agents\\{\small (full version)}}
\author{Ankit Chauhan\thanks{Algorithm Engineering Group, Hasso Plattner Institute Potsdam, Germany, \texttt{firstname.lastname@hpi.de}} \and Pascal Lenzner\footnotemark[1] \and Louise Molitor \footnotemark[1]}
\date{~}
\newcommand*{\cost}{\textit{cost}}
\newcommand*{\dist}{d}
\newcommand*{\fav}{\textit{fav}}
\newcommand*{\RoH}{\textrm{RoH}}
\newcommand*{\RoD}{\textrm{RoD}}
\newcommand*{\opt}{\textrm{opt}}
\newcommand{\pg}{p_G}
\begin{document}

\maketitle

\begin{abstract}
 \noindent Schelling's segregation model is a landmark model in socio\-logy. It shows the counter-intuitive phenomenon that residential segregation between individuals of different groups can emerge even when all involved individuals are tolerant. Although the model is widely studied, no pure game-theoretic version where rational agents strategically choose their location exists. We close this gap by introducing and analyzing generalized game-theoretic models of Schelling segregation, where the agents can also have individual location preferences. 
 
 For our models we investigate the convergence behavior and the efficiency of their equilibria. In particular, we prove guaranteed convergence to an equilibrium in the version which is closest to Schelling's original model. Moreover, we provide tight bounds on the Price of Anarchy. 
 \end{abstract}
 
 \section{Introduction}
 Segregation is a well-known sociological phenomenon which is intensely monitored and investigated by sociologists and economists. It essentially means that a community of people which is mixed along e.g. ethical, racial, linguistic or religious dimensions tends to segregate over time such that almost homogeneous sub-communities emerge. The most famous example of this phenomenon is residential segregation along racial lines in many urban areas in the US.\footnote{See the \href{https://demographics.coopercenter.org/Racial-Dot-Map/}{racial dot map}~\cite{C13} for an impressive visualization.}           
 
 To explain the emergence of residential segregation Schelling~\cite{Schelling71} proposed in a seminal paper a very simple and elegant agent-based model.  
 In Schelling's model two types of agents, say type $A$ and type $B$ agents, are placed on a line or a grid which models some residential area. Each agent is aware of its neighboring agents and is content with her current residential position if at least a $\tau$ fraction of agents in her neighborhood is of the same type, for some $0\leq \tau \leq 1$. If this condition is not met, then the agent becomes discontent with her current position and exchanges positions with a randomly chosen discontent agent of the other type or the agent jumps to a randomly chosen empty spot.\footnote{A playful \href{http://ncase.me/polygons/}{interactive demonstration} can be found in~\cite{H16}.} Schelling showed with simple experiments using coins, graph paper and random numbers that even with $\tau \leq \frac{1}{2}$, i.e., with tolerant agents, the society of agents will eventually segregate into almost homogeneous communities. This surprising observation caught the attention of many economists, physicists, demographers and computer scientists who studied related random models and verified experimentally that tolerant local neighborhood preferences can nonetheless induce global segregation in social and residential networks, see e.g.~\cite{RAL93,BW07,Wil08,HPP11}. 
 
 To the best of our knowledge, all agent-based models of segregation are essentially random processes where discontent agents choose their new location at random. In this paper we depart from this assumption by introducing and analyzing a game-theoretic version of Schelling's model where agents strategically choose their location. Empirically, our model yields outcomes which are very similar to Schelling's original model - see Fig.~\ref{fig:sample} for an example.
 \begin{figure}[h!]
 \centering
 \includegraphics[width=\textwidth]{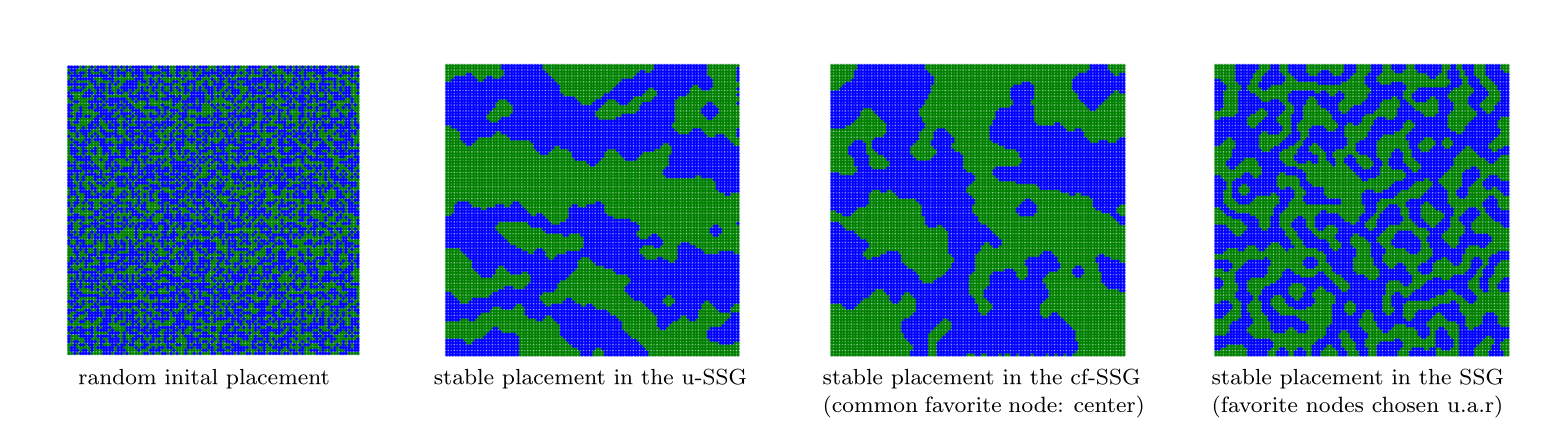}\vspace*{-0.3cm}
 \caption{Sample equilibria of our model showing significant segregation. Here $|A|=|B|=5000$, nodes of type $A$ are blue, type $B$ nodes are green and $\tau = \frac{1}{2}$.  \vspace*{-0.5cm}}
 \label{fig:sample}
 \end{figure}

 \noindent Moreover, our model generalizes Schelling's model since we allow agents to have preferences over the available locations. Hence, we introduce and explore the influence of such individual location preferences.
 
 \subsection{Related Work}
 There is a huge body of work on Schelling's model and variations thereof, see e.g. \cite{Sch69,S06,Cla86,Whi86,BW07,IEE09,Pan07,You98,Vin06}. Most related work is purely empirical and provides simulation results. We focus here on the surprisingly small amount of related work, which rigorously proves properties of (variants of) Schelling's model.
 
Young~\cite{You98} was the first to rigorously analyze a variant of the one-dimensional segregation model by using techniques from evolutionary game theory. He considered the specific dynamics where a pair of agents is chosen at random and they swap places with a suitably chosen probability. Then he analyzes the induced Markov chain and proves that under certain conditions total segregation will be with high probability a stochastically stable state. Later Zhang~\cite{Zha04,Zha04b} proved similar results in 2-dimensional models.    

The first rigorous analysis of the original Schelling model was achieved by Brandt et al.~\cite{BIK12} for the case where agents with tolerance parameter $\tau = \frac{1}{2}$ are located on a ring and agents can only swap positions. They prove that the process converges with high probability to a state where the average size of monochromatic neighborhoods is polynomial in $w$, where $w$ is the window-size for determining the neighborhood. Interestingly, Barmpalias et al.~\cite{BEL14} have proven a drastically different behavior for $0.3531 < \tau < \frac{1}{2}$ where the size of monochromatic neighborhoods is exponential in $w$. Later, Barmpalias et al.~\cite{BEL16} analyzed a 2-dimensional variant where both agent types have different tolerance parameters and agents may change their type if they are discontent. Finally, Immorlica et al. \cite{BIK17} considered the random Schelling dynamics on a 2-dimensional toroidal grid with $\tau=\frac{1}{2}-\epsilon$, for some small $\epsilon>0$. Their main result is a proof that the  average size of monochromatic neighborhoods is exponential in $w$.

Not much work has been done on the game theory side. To the best of our knowledge only the model by Zhang~\cite{Zha04b} is game-theoretic and closely related to Schelling's model.
In this game agents are placed on a toroidal grid and are endowed with a noisy single peaked utility function which depends on the ratio of the numbers of the two agent types in any local neighborhood. The highest utility is attained in perfectly balanced neighborhoods and agents slightly prefer being in the majority over being in the minority. In contrast to our model, Zhang's model~\cite{Zha04b} assumes transferable utilities and it can happen that after a randomly chosen swap one or both agents are worse off. Moreover Zhang's model does not incorporate the threshold behavior at $\tau$. 
However, despite the different model Zhang~\cite{Zha04b} uses a similar potential function as we do in this paper. 

We note that hedonic games~\cite{DG80,BJ02} are also remotely related to Schelling's model, but there the utility of an agent only depends on her chosen coalition. In Schelling's model the neighborhood of an agent could be considered as her coalition, but then not all agents in a coalition derive the same utility from it.

\subsection{Model and Notation}
We consider a network $G = (V,E)$, where $V$ is the set of nodes and $E$ is the set of edges, which is connected, unweighted and undirected. If in $G$ every node has the same degree, i.e., the same number of incident edges, then we call $G$ \emph{regular}.
The distance $\dist_G(u,v)$ between two nodes $u,v \in V$ in network $G$ is the number of edges on a shortest path between $u$ and $v$. The diameter of $G$ is the length of the longest shortest path between any pair of nodes and is denoted by $D(G)$. For a given node $u \in V$ let $N_w(u)$ be the set of nodes $v \neq u$ which are in distance at most $w$ from node $u$. We call $N_w(u)$ the $w$-neighborhood of $u$ and $w\geq 1$ is the window size. We will omit $w$ whenever a statement holds for all $w\geq 1$.

Agents of two different types are located on the nodes of network $G$. There are two disjoint sets of agents $A$ and $B$, with $|A|, |B| \geq 2$ and we say that all agents $a \in A$ are of type $A$ and agents  $b\in B$ are of type $B$. In each state of our game, there is an injective mapping $\pg:\{A\cup B\}\rightarrow V$ between agents and nodes which we call a \emph{placement}. In any placement $\pg$ a node of $G$ can be occupied by exactly one agent either from $A$ or from $B$ or the node can be empty. Let $\pg$ be any placement and let $x,y$, with $x\in A$ and $y \in B$, be agents which are neighbors under placement $\pg$. In this case, we call $x,y$ a \emph{colored pair}.

For any agent $x \in A \cup B$, $\pg(x) = u$, we define $N^+_w(\pg(x)) \subseteq N_w(\pg(x))$, as the set of other nodes $v$ in the $w$-neighborhood of node $u$, with $u\neq v$, which are occupied by the same type of agents as agent $x$ and $N^-_w(\pg(u)) \subseteq N_w(\pg(u))$ is the corresponding set of other nodes which are occupied by agents of the other type. Note that $\pg(x) \notin N^+_w(\pg(x))$. If $|N^+_w(\pg(x))| + |N^-_w(\pg(x))| = 0$, then agent $x$ has no neighboring agents and we say that agent $x$ is \emph{isolated}. 

Let $\tau \in [0,1]$ be the \emph{tolerance parameter}. Similar to Schelling's model we say that an un-isolated agent $x$ is \emph{happy} or \emph{content} with placement $\pg$ if at least a $\tau$-fraction of the agents which occupy the nodes in her $w$-neighborhood under $\pg$ are of the same type as her. I.e., an un-isolated agent $x$ is happy if $\tfrac{|N^+_w(\pg(x))|}{|N^+_w(\pg(x))|+|N^-_w(\pg(x))|} \geq \tau,$ otherwise $x$ is \emph{unhappy} or \emph{discontent} with placement $\pg$. Moreover, we will assume that isolated agents are always unhappy. We call the ratio $\tfrac{|N^+_w(\pg(x))|}{|N^+_w(\pg(x))|+|N^-_w(\pg(x))|}$ the \emph{local happiness ratio} of agent $x$.
Besides having preferences about the neighborhood structure, every agent may have a \emph{favorite node} $\fav_x \in V$ in the network $G$.

The cost function of our agents is based on two main assumptions:
\begin{quote}
\begin{itemize}[nosep]
 \item[(1)] An agent's high priority goal is to find a location where she is happy. 
 \item[(2)] An agent's low priority goal is to find a location which is as close as possible to her favorite location. 
\end{itemize}
\end{quote}
Thus, a happy agent $x$ strives for locations where she is happy, but as close as possible to $\fav_x$. If an agent $x$ is unhappy, she will try to improve her local happiness ratio. If this is not possible then she will select a location which has maximum possible local happiness ratio and which is closest to $\fav_x$.  

We incorporate these assumptions as follows in our cost function:
The cost of an un-isolated agent $x$ with placement $\pg$ in network $G$ is the vector 
\[\cost_x(\pg) = \left(\max\left(0,\tau - \frac{|N^+_w(\pg(x))|}{|N^+_w(\pg(x))|+|N^-_w(\pg(x))|}\right),\dist_G(\fav_x,\pg(x)) + 1\right).\]
For an isolated agent $x$ the cost vector is $\cost_x(\pg) = \left(\tau,\ \dist_G(\fav_x,\pg(x)) + 1\right)$.

Thus, an agent $x$ is happy with placement $\pg$, if and only if $\cost_x(\pg) = (0,\cdot)$. Note that we use $d_G()+1$ instead of $d_G()$ as second component of the cost vector for technical reasons. This has no influence on the behavior of the agents. 

We choose the lexicographic order $\leq_{lex}$ \footnote{$
 (\alpha,\beta) <_{lex} (\gamma,\delta) \textnormal{, if }\alpha < \gamma \textnormal{ or } \alpha = \gamma \textnormal{ and }\beta < \delta$. $
 (\alpha,\beta) =_{lex} (\gamma,\delta) \textnormal{, if }\alpha = \gamma\textnormal{ and }\beta = \delta$. $
 (\alpha,\beta) >_{lex} (\gamma,\delta) \textnormal{, if }\alpha > \gamma \textnormal{ or } \alpha = \gamma \textnormal{ and }\beta > \delta$.} for comparing cost vectors.
Agents want to minimize their cost vector lexicographically, i.e., it is more important for an agent to be happy than to be close to her favorite node.

The \emph{social cost} $\cost(\pg)$ of a placement $\pg$ in network $G$ is the vector consisting of the number of unhappy players and the sum of all distance terms: 
\begin{flalign*}
 &\cost(\pg) =\\
 &\hspace*{1.5 em}\left(|\left\{x \in A \cup B \mid \cost_x(\pg) = (\alpha,\cdot), \alpha \neq 0\right\}|, \sum_{x \in A\cup B} (\dist_G(\fav_x,\pg(x)) + 1)\right). 
\end{flalign*}
The \emph{strategy space} of an agent is the set of all nodes of $G$. A strategy vector is \emph{feasible} if all of its entries are pairwise disjoint. Clearly, there is a bijection between feasible strategy vectors and placements $\pg$ and we will use them interchangeably.
For the possible strategy changes of an agent there are two versions, which yield the \emph{Swap Schelling Game} and the \emph{Jump Schelling Game}.

\textbf{The Swap Schelling Game:} In the \emph{Swap Schelling Game (SSG)} only pairs of agents can jointly change their strategies by swapping locations. Two agents $x$ and $y$ agree to swap their nodes if both agents strictly decrease their cost by swapping. A placement $\pg$ is \emph{stable} if no pair of agents can both improve their cost via swapping. Hence, stable placements correspond to 2-coalitional pure Nash equilibria.  
Since locations can only be swapped, we will assume throughout the paper that there are no empty nodes in $G$, that is, $\pg$ is also surjective.

\textbf{The Jump Schelling Game:}
In the \emph{Jump Schelling game (JSG)} an agent can change her strategy to any currently empty node, which constitutes a ``jump'' to that node. An agent will jump to another empty node, if this strictly decreases her cost. Here a \emph{stable} placement $\pg$ corresponds to a pure Nash equilibrium.

\textbf{Different Variants:} Besides assuming that every agents has some individual favorite position, we will consider two additional variants of the SSG and the JSG, depending on the favorite nodes of the agents. If the agents do not have a favorite node, then we call these versions \emph{uniform} (\emph{u-SSG} or \emph{u-JSG}) and we simply ignore the second entry in the cost vector. Note that the uniform versions are very close to Schelling's original model. If all agents have the same favorite node, then we call these games \emph{common favorites} (\emph{cf-SSG} or \emph{cf-JSG}). Observe that this variant is especially interesting since it models the case where some particular location is intrinsically more attractive than others to all agents, e.g. it could be the most popular location in a city.

\textbf{Dynamic Properties:} We will use \emph{ordinal potential functions}. Such a function $\Phi$ maps placements to real numbers with the property that if $\pg'$ is the placement which results from an improving move by a (pair of) agent(s) under placement $\pg$ then $\Phi(\pg') < \Phi(\pg)$. 
If an ordinal potential function for some special case of the game exists, then this implies that this special case has the \emph{finite improvement property (FIP)}, which states that any sequence of improving moves must be finite. Having the FIP is equivalent to the game being a potential game~\cite{MS96}. Such games have many attractive properties like guaranteed existence of pure equilibria and often a fast convergence to such a stable state. Moreover, a potential function is useful for analyzing the quality of equilibria. 
In contrast, if an infinite sequence of improving moves, usually called \emph{improving response cycle (IRC)}, exists then there cannot exist an ordinal potential function.

\subsection{Our Contribution}
We introduce the first agent-based model for Schelling segregation where the agents strategically choose their locations. For this, we consider a generalization of Schelling's model where agents besides having preferences over their local neighborhood structure also have preferences of the possible locations. This introduces the important aspect of individual location differentiation which has a significant influence on residential decisions in real life. 

Our main contribution is a thorough investigation of the convergence properties of many variants of our model. See Table~\ref{overviewtable} for details.

\begin{table}[h!]
\centering 
\resizebox{\textwidth}{!}{\begin{tabular}{l|c|c|c|c|c}
& u-SSG & cf-SSG & SSG & u-JSG & cf-JSG \& JSG \\ \hline
$\tau < \frac{1}{2}$ & $\checkmark$ (T.\ref{theorem_potential_game}) & reg. (T.\ref{thm:potentialgameSSGsingle})& ring, $w=1$ (T.\ref{thm:SSG-ring}) & ring, $w=1$ (T.\ref{thm:JSG-ring}) & IRC: $\frac{1}{3} < \tau <\frac{1}{2}$ (T.\ref{thm:IRC_cfJSG}) \\ \hline
$\tau = \frac{1}{2}$ & $\checkmark$ (T.\ref{theorem_potential_game})& reg. (T.\ref{thm:potentialgameSSGsingle})& reg. (T.\ref{thm:potentialgameSSGmorethanhalf}) & ring, $w=1$ (T.\ref{thm:JSG-ring}) & IRC (T.\ref{thm:IRC_cfJSG}) \\ \hline
$\tau > \frac{1}{2} $ & reg. (T.\ref{thm:potentialgameSSG})& reg. (T.\ref{thm:potentialgameSSGsingle})& reg. (T.\ref{thm:potentialgameSSGmorethanhalf})& ring, $w=1$ (T.\ref{thm:JSG-ring}) & IRC: $\frac{1}{2} < \tau \leq  \frac{2}{3}$(T.\ref{thm:IRC_cfJSG})\vspace*{0.2cm}
\end{tabular}}
\caption{Convergence results. ``$\checkmark$'': potential game for any $w$ and $G$. ``reg.'': potential game for any $w$ and regular networks $G$. ``ring'': potential game on a ring. IRC: an improving response cycle exists, i.e., not a potential game. }
\label{overviewtable}
\end{table}

In particular, we prove guaranteed convergence to an equilibrium for u-SSG, which essentially is Schelling's model, if tolerant agents are restricted to location swaps or if the underlying network is regular. In contrast, previous work~\cite{BIK12,BEL14,BEL16,BIK17} has established, that the process converges with high probability. Moreover, also the (cf)-SSG behaves nicely on regular networks. In contrast to this, we show that location preferences have a severe impact in the (cf-)JSG, since improving response cycles exist, which imply that there cannot exist a potential function. 

Furthermore, we investigate basic properties of stable placements and their efficiency in the (u-)SSG. In particular, we prove tight bounds on a variant of the Price of Anarchy for the (u-)SSG.    
\section{Dynamic Properties}\label{sec:dynamics}
We analyze the convergence behavior of the Schelling game. Our main goal is to investigate under which conditions an ordinal potential function~$\Phi$ exists.

\subsection{Dynamic Properties of the Swap Schelling Game} 
We prove for various special cases of the SSG that they are actually potential games. For this we analyze the change in the potential function value for a suitably chosen potential function $\Phi$ for an arbitrary location swap of two agents $x$ and $y$. Such a swap changes the current placement $\pg$ only in the locations of agents $x$ and $y$ and yields a new placement~$\pg'$.

\begin{theorem}
If $\tau \leq \frac{1}{2}$ then the u-SSG is a potential game for any $w \geq 1$. \label{theorem_potential_game}
\end{theorem}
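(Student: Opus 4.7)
The plan is to show that
\[\Phi(\pg) = \bigl|\{\{a,b\} : a \in A,\, b \in B,\, \dist_G(\pg(a),\pg(b)) \leq w\}\bigr|,\]
the number of distinct colored pairs of agents whose positions lie within $w$-distance of one another, is an ordinal potential for the u-SSG.

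First, I would argue that in the u-SSG only unhappy agents can ever take part in a strictly improving swap (a happy agent already has cost $0$ and cannot strictly decrease it), and that a strict cost decrease for an unhappy agent is equivalent to a strict increase of her local happiness ratio. Since $\tau \leq \tfrac{1}{2}$ and the SSG has no empty nodes, an unhappy agent $x \in A$ at $u = \pg(x)$ satisfies $|N^+_w(u)| < |N^-_w(u)|$; writing $a_u = |N^+_w(u)|$, $b_u = |N^-_w(u)|$, and $n_u = a_u + b_u = |N_w(u)|$, this gives the integer inequality $n_u - 2a_u \geq 1$, and symmetrically $2a_v - n_v \geq 1$ for an unhappy $y \in B$ at $v = \pg(y)$.

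Next, for a swap of $x$ and $y$ I would directly track which colored pairs appear and disappear. Only pairs involving $x$ or $y$ can change, because every other agent's position is fixed; setting $\delta = 1$ if $\dist_G(u,v) \leq w$ and $\delta = 0$ otherwise, careful bookkeeping (using that $y$ is a $B$-neighbor of position $u$ after the swap exactly when $\delta = 1$, and analogously for $x$ and $v$) yields
\[\Delta \Phi = (n_v - n_u) + 2(a_u - a_v) + 2\delta = -(n_u - 2a_u) - (2a_v - n_v) + 2\delta.\]
Combined with the two integer bounds from the previous step this gives $\Delta \Phi \leq 2\delta - 2 \leq 0$, already strictly negative when $\delta = 0$.

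The main obstacle is the boundary case $\delta = 1$ with simultaneous equalities $n_u - 2a_u = 1$ and $2a_v - n_v = 1$, where the bound only yields $\Delta \Phi \leq 0$. To rule it out I would invoke the mutual strict-improvement conditions $n_u(a_v - 1) > n_v a_u$ and $n_v(b_u - 1) > n_u b_v$. Substituting $a_u = (n_u - 1)/2$, $a_v = (n_v + 1)/2$, $b_u = (n_u + 1)/2$, $b_v = (n_v - 1)/2$ reduces the first to $n_v > n_u$ and the second to $n_u > n_v$, a contradiction. Hence $\Delta \Phi \leq -1 < 0$ on every mutually improving swap, so $\Phi$ is an ordinal potential and the u-SSG is a potential game for all $w \geq 1$ and all connected networks $G$.
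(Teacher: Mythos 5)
Your proposal is correct, and it uses exactly the same potential function as the paper: your count of colored pairs equals the paper's $\Phi(\pg)=\tfrac{1}{2}\sum_{x\in A\cup B}|N^-(\pg(x))|$, and both arguments reduce to showing that a mutually improving swap of two unhappy agents of different types destroys more colored pairs than it creates. The difference is in the execution, and your version is the more careful one. The paper's proof of Theorem~\ref{theorem_potential_game} rests on the identity $|N^-(\pg'(y))|+|N^-(\pg'(x))| = |N^+(\pg(x))|+|N^+(\pg(y))|$, which holds only when the two swapping agents lie outside each other's $w$-neighborhoods (your $\delta=0$); when they are within distance $w$ of each other, each term picks up an extra $+1$ (the paper itself makes exactly this correction in the proof of Theorem~\ref{thm:potentialgameSSG}), and then the unhappiness inequalities $|N^+|<|N^-|$ alone only yield $\Delta\Phi\le 0$. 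Your treatment of the tight case --- substituting $n_u-2a_u=1$ and $2a_v-n_v=1$ into the two strict-improvement inequalities and deriving the contradictory $n_v>n_u$ and $n_u>n_v$ --- is precisely what is needed to rule out $\Delta\Phi=0$ there, so your argument is in effect a completed version of the paper's sketch. One cosmetic remark: in your bookkeeping sentence you say that ``$y$ is a $B$-neighbor of position $u$ after the swap''; after the swap $y$ occupies $u$ itself, so you presumably mean that $x$ (now at $v$) is an $A$-occupant of a node in $N_w(u)$, respectively $y$ (now at $u$) occupies a node in $N_w(v)$, exactly when $\delta=1$. The displayed formula for $\Delta\Phi$ is nevertheless the correct one, and the overall proof stands.
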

 
\label{potentialgame}
\begin{proof}
We prove the statement by showing that
$$\Phi(\pg) = \frac{1}{2}\sum_{x \in A\cup B}|N^-(\pg(x))|$$ is an ordinal potential function. Note that $\Phi$ is the number of colored pairs. 
First of all, notice that a swap between two agents $x \in A$ and $y \in B$ will only executed when both agents are unhappy and of different types, since a swap between agents of the same type cannot be an improvement for at least one of the involved agents. Furthermore a happy agent has no possibility to improve, so there is no incentive to. An agent will decrease her cost if and only if she is unhappy and reduces the ratio of neighbors with different type by swapping.  It holds that
\[\frac{|N^+(\pg(x))|}{|N^+(\pg(x))|+|N^-(\pg(x))| } < \tau\ \text{and } \frac{|N^+(\pg(y))|}{|N^+(\pg(y))|+|N^-(\pg(y))| } < \tau. \]
Hence, 
\[\frac{|N^-(\pg(x))|}{|N^+(\pg(x))|+|N^-(\pg(x))| } > 1-\tau\ \text{and } \frac{|N^-(\pg(y))|}{|N^+(\pg(y))|+|N^-(\pg(y))| } > 1-\tau. \]
Since $\tau \leq \frac{1}{2}$ it follows that 
\begin{align*}|N^+(\pg(x))| &< \tau \cdot (|N^+(\pg(x))|+|N^-(\pg(x))| ) \\ &\leq\ (1-\tau)\cdot (|N^+(\pg(x))|+|N^-(\pg(x))| ) \\&<  |N^-(\pg(x))|
\end{align*}
and analogously we get for agent $y$ that $|N^+(\pg(y))| < |N^-(\pg(y))|$. Thus,
\begin{alignat*}{2}
&\ |N^+(\pg(x))| + |N^+(\pg(y))| < |N^-(\pg(x))| + |N^-(\pg(y))| \\
= &\ |N^-(\pg'(y))| + |N^-(\pg'(x))| < |N^-(\pg(x))| + |N^-(\pg(y))|.
\end{alignat*} 
This implies that for the change in the potential function value,

 $\Phi(\pg)-\Phi(\pg')=(|N^-(\pg(x))| + |N^-(\pg(y))|)-(|N^-(\pg'(y))| + |N^-(\pg'(x))|)>0.$ 
\end{proof}

\begin{remark}
The function $\Phi(\pg) = \frac{1}{2}\sum_{x \in A\cup B}|N^-(\pg(x))|$ is not a potential function for the (cf-)SSG. See Fig.~\ref{fig:potential_function_position} below.
\end{remark}
\begin{figure}[h!]
\centering
\includegraphics[scale=0.4]{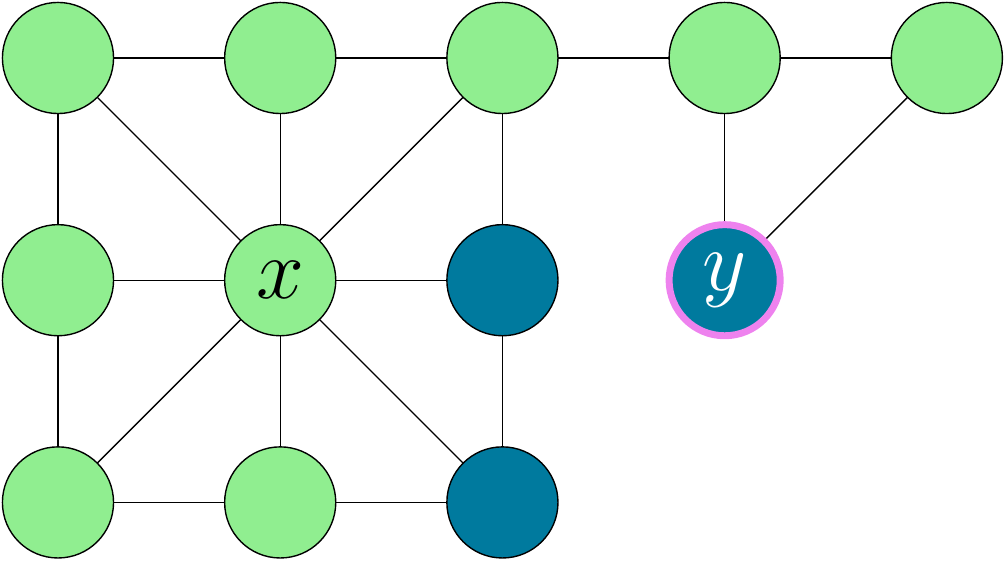}
\caption{Consider the two agents $x$ and $y$ for $\tau = 0.3$. Let the current node of $y$ be the favorite node $\fav_x=\fav_y$ of agent $x$ and agent $y$. A swap is an improving move for both agents, since agent $y$ reduces her cost from $(0.3,1)$ to $(0.05, 4)$ and $x$ reduces her cost from $(0,4)$ to $(0,1)$. However, the potential function changes by $2 + 2 - 6 = -2$ and therefore increases.}
\label{fig:potential_function_position}
\end{figure}

\pagebreak

\begin{theorem}\label{thm:potentialgameSSG} For any $w\geq 1$ the u-SSG on regular networks is a potential game. 
\end{theorem}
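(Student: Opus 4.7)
The plan is to show that the same bichromatic pair counter $\Phi(\pg)=\tfrac{1}{2}\sum_{x\in A\cup B}|N^-(\pg(x))|$ used in Theorem~\ref{theorem_potential_game} remains an ordinal potential, extending the argument to $\tau>\tfrac{1}{2}$ by invoking regularity. Since Theorem~\ref{theorem_potential_game} already covers $\tau\le\tfrac{1}{2}$ on arbitrary networks, the substantive case is $\tau>\tfrac{1}{2}$, where being unhappy no longer forces a strict minority among the neighbors and the inequality chain from that proof fails. Regularity enters through the identity $|N_w(u)|=|N_w(v)|=k$ for every pair of vertices involved in a swap, which collapses the two fractions in the two agents' improvement conditions into a single integer inequality that is exactly what the potential difference demands.

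I would fix an improving swap between $x\in A$ at $u$ and $y\in B$ at $v$ yielding placement $\pg'$, and abbreviate $a=|N^+(\pg(x))|$, $b=k-a=|N^-(\pg(x))|$, $a'=|N^-(\pg(y))|$, $b'=k-a'=|N^+(\pg(y))|$; let $\delta=1$ if $v\in N_w(u)$ and $\delta=0$ otherwise. Splitting the contribution to $\Phi$ by the three kinds of affected edges of the $w$-neighborhood graph -- those incident to $u$ but not $v$, those incident to $v$ but not $u$, and the possible edge $\{u,v\}$ itself (which stays bichromatic) -- and using $b=k-a$, $b'=k-a'$ yields
\[\Phi(\pg)-\Phi(\pg')\;=\;2(a'-a-\delta).\]
The task then reduces to proving $a'>a+\delta$ whenever the swap strictly improves both agents.

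To derive this, I would split each agent's strict cost decrease into two subcases: \emph{becomes happy} and \emph{stays unhappy but with higher ratio}. For $x$, whose new ratio is $(a'-\delta)/k$ and whose old ratio $a/k$ satisfies $a/k<\tau$, the second subcase gives $a'-\delta>a$ directly, while the first uses the chain $a'-\delta\ge \tau k>a$ to obtain the same inequality. Substituting $b=k-a$ and $b'=k-a'$ shows that the analogous condition for $y$ collapses to the \emph{same} inequality $a'-\delta>a$, so the two improvement conditions jointly give exactly what the potential change requires. Since $\Phi$ is a non-negative integer bounded above by the number of edges of the $w$-neighborhood graph, a strict decrease along any sequence of improving moves forces termination.

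The main obstacle I anticipate is the adjacent case $\delta=1$ (when $v\in N_w(u)$): the swap then perturbs the very neighborhoods that define the agents' new ratios, and the bare ratio inequalities only deliver $a'-\delta>a$ rather than a naive $a'>a$. The bookkeeping above is arranged precisely so that the $\delta$ slack appears identically on both sides, but one must be careful not to double-count the edge $\{u,v\}$ when decomposing $\Phi$ into contributions from $u$ and from $v$. A secondary subtlety is that the regularity hypothesis must guarantee $|N_w(u)|=|N_w(v)|$; this is automatic for $w=1$ on $d$-regular graphs and is the only place where regularity is actually consumed.
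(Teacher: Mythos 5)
Your proof is correct and follows essentially the same route as the paper: the colored-pair potential $\Phi$, regularity to equalize the denominators to $k$, and a case split on whether the two swapped nodes lie in each other's $w$-neighborhood (your $\delta$), with your exact identity $\Phi(\pg)-\Phi(\pg')=2(a'-a-\delta)$ being a tidier packaging of the paper's two separate inequalities. The only omission is the one-line preliminary remark that in the u-SSG happy agents cannot strictly improve and a same-type swap merely exchanges the two agents' local happiness ratios (so it cannot benefit both), which is needed to justify restricting attention to swaps between an unhappy $A$-agent and an unhappy $B$-agent.
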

\begin{proof}
We prove the statement by showing that $$\Phi(\pg) = \frac{1}{2}\sum_{x \in A\cup B}|N^-(\pg(x))|$$ is an ordinal potential function. 
Analogously to the proof of Theorem~\ref{theorem_potential_game}, there is no incentive for a happy agent to swap or for an unhappy agent $x\in A$ to swap with another agent $y \in A$. So an unhappy agent $x \in A$ will only swap with another agent $y \in B$ of different type.

Since we consider the u-SSG on regular networks, we have \[\forall x \in A \cup B\colon |N(\pg(x))| = |N^+(\pg(x))| + |N^-(\pg(x))| = k. \]
For any improving swap between $x\in A$ and $y\in B$ it follows that
$$\frac{|N^+(\pg(x))|}{k} < \frac{|N^+(\pg'(x))|}{k} \text{ and } \frac{|N^+(\pg(y))|}{k} < \frac{|N^+(\pg'(y))|}{k}.$$
We have to distinguish between two cases: 
\begin{itemize}
\item Assume $x$ and $y$ are in different neighborhoods, so $x \notin N_w(\pg(y))$ and $y \notin N_w(\pg(x))$. Observe that all agents who were in $N^+(\pg(x))$ or $N^+(\pg(y))$ before the swap are after the swap in $N^-(\pg'(y))$ or $N^-(\pg'(x))$, respectively. The same holds the other way around. 
Hence, 
$$|N^-(\pg'(y))| = |N^+(\pg(x))| < |N^+(\pg'(x))|=|N^-(\pg(y))|$$ 
and 
$$|N^-(\pg'(x))| = |N^+(\pg(y))| < |N^+(\pg'(y))| = |N^-(\pg(x))|.$$
\item Assume $x$ and $y$ are in the same neighborhood, so $x \in N_w(\pg(y))$ and $y \in N_w(\pg(x))$. All agents who were in $N^+(\pg(x))$ or $N^+(\pg(y))$ before the swap are after the swap in $N^-(\pg'(y))$ or $N^-(\pg'(x))$, respectively. Except for $x$ and $y$ all agents who were in $N^-(\pg(y))$ or $N^-(\pg(x))$ are after the swap in $N^+(\pg'(x))$ or $N^+(\pg'(y))$, respectively. The two agents $x$ and $y$ who are involved in the swap are after the swap in $N^-(\pg'(y))$ and $N^-(\pg'(x))$, respectively.
Hence $$|N^-(\pg'(y))| - 1 = |N^+(\pg(x))| < |N^+(\pg'(x))|=|N^-(\pg(y))| - 1$$ and 
$$|N^-(\pg'(x))| - 1 = |N^+(\pg(y))| < |N^+(\pg'(y))| = |N^-(\pg(x))| - 1.$$

\end{itemize}
Since a swap between two agents $x$ and $y$ just affects colored pairs where $x$ or $y$ are involved, we have that $$\Phi(\pg) - \Phi(\pg') =|N^-(\pg(x))| + |N^-(\pg(y))| - (|N^-(\pg'(x))| + |N^-(\pg'(y))|) > 0. \qedhere$$ 
\end{proof}

\begin{remark} The function $\Phi(\pg) = \frac{1}{2}\sum_{x \in A\cup B}|N^-(\pg(x))|$ is not a potential function for the u-SSG on non-regular networks. See Fig.~\ref{fig:potential_function}.
\end{remark}
\begin{figure}[h!]
\centering
\includegraphics[scale=0.4]{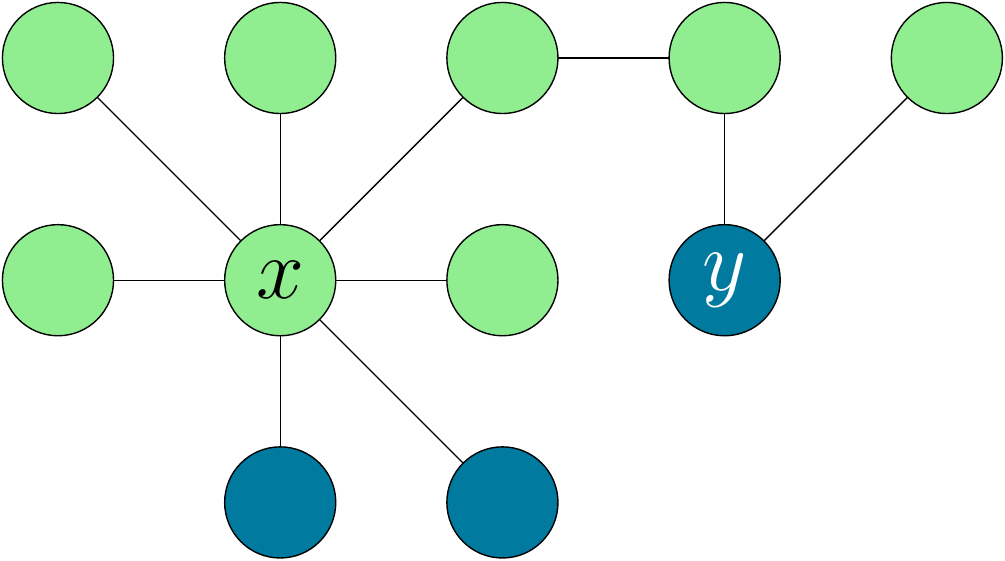}
\caption{Consider the two agents $x$ and $y$ for $\tau = \frac{6}{7}$. A swap is an improving move for both agents, since agent $x$ is happy afterwards and reduces her cost from $(\frac{1}{7},1)$ to $(0,1)$ and $y$ reduces her cost from $(\frac{6}{7},1)$ to $(\frac{4}{7},1)$. However, the potential function changes by $2 + 2 - 5 = -1$ and therefore increases.}
\label{fig:potential_function}
\end{figure}

\begin{theorem}\label{thm:potentialgameSSGsingle} For any $w \geq 1$ the cf-SSG is a potential game on regular networks. 
\end{theorem}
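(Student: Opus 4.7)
The plan is to show that the potential function $\Phi(\pg) = \frac{1}{2}\sum_{x \in A \cup B} |N^-(\pg(x))|$ used in Theorem~\ref{thm:potentialgameSSG} remains an ordinal potential for the cf-SSG on regular networks. As before, swaps between two agents of the same type are never improving for either participant, so I restrict attention to an arbitrary improving swap between $x \in A$ at position $u = \pg(x)$ and $y \in B$ at position $v = \pg(y)$, producing the placement $\pg'$.

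The first step is to argue that any improving swap in the cf-SSG forces both agents to strictly improve their happiness ratio. Because all agents share a common favorite node $f$, the swap merely permutes the two distances $d_G(f,u)$ and $d_G(f,v)$ between the agents, so at most one of $x, y$ strictly decreases its distance to $f$. By the lexicographic improvement criterion, the other agent must then strictly decrease the first component of its cost vector, which means it was unhappy before the swap and its happiness ratio strictly increased.

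The second step exploits regularity to transfer the ratio improvement to the other agent. In a $k$-regular network with no empty nodes, $|N^+(\pg(z))| + |N^-(\pg(z))| = k$ for every agent $z$, and the swap only modifies the types at positions $u$ and $v$. A short calculation in each of the two cases distinguished in the proof of Theorem~\ref{thm:potentialgameSSG} (namely, $v \notin N_w(u)$ and $v \in N_w(u)$) shows that the conditions ``$x$'s happiness ratio strictly improves'' and ``$y$'s happiness ratio strictly improves'' reduce to the same inequality between $|N^+(\pg(x))|$ and $|N^-(\pg(y))|$, with a $\pm 1$ correction when $u, v$ are within distance $w$. Consequently, if one agent's ratio strictly increases, so does the other's.

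Combining the two steps, every improving swap causes both participating agents to strictly improve their happiness ratio. The closing computation from the proof of Theorem~\ref{thm:potentialgameSSG} then applies verbatim and yields $\Phi(\pg) - \Phi(\pg') > 0$. I expect the main obstacle to be the first step: if an agent is happy both before and after the swap, its first cost coordinate stays at $0$ and never strictly decreases, so a strict ratio improvement does not follow directly from the $\alpha$-component. The fact that a swap only permutes the two common-favorite distances is precisely what rules out the problematic scenario in which both agents gain only from distance, and therefore forces at least one of them to improve by a strict change in its happiness ratio, which regularity then propagates to the other agent.
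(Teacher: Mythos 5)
Your proposal is correct and follows essentially the same route as the paper: the same potential function $\Phi(\pg)=\frac{1}{2}\sum_{x}|N^-(\pg(x))|$, the same key observation that a swap merely permutes the two distances to the common favorite node so at least one agent cannot gain on the distance term and must therefore strictly improve its happiness ratio, and the same use of regularity ($|N^-| = k - |N^+|$, with the $\pm 1$ correction when the swapped agents lie in each other's neighborhoods) to show that one agent's ratio improvement forces the other's. The only difference is presentational: you fold the paper's explicit case analysis (both happy, one happy and one unhappy, both unhappy) into a single unified argument.
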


\begin{proof} 
Analogously to the proof of Theorem \ref{thm:potentialgameSSG} we prove the statement by showing that $$\Phi = \frac{1}{2}\sum_{x \in A\cup B}|N^-(\pg(x))|$$ is an ordinal potential function. 

Again, note that there is no incentive for an agent $x \in A$ to swap with another agent $y \in A$ who has the same type, since this swap is a setback for at least one of the involved agents, as she either becomes unhappy or increases her distance cost. Without loss of generality, we can assume $x \in A$ and $y \in B$.

If $x$ and $y$ are both happy before the swap, they would also not swap. Either at least one of them becomes unhappy or, since both have the same favorite node $\fav$, the decrease in the distance cost of one agent is equal to the increase of the distance cost of the other agent. Thus to swap is not an improving move for at least one of the involved agents.

Hence, we just have to consider the case where at least one agent is unhappy.
First, we assume without loss of generality that $y$ is unhappy before the swap and $x$ is happy.
Again, since we have regular networks, we have \[\forall x' \in A \cup B\colon |N(\pg(x'))| = |N^+(\pg(x'))| + |N^-(\pg(x'))| = k. \]
Agent $x$ is just willing to swap if she stays happy and gets closer to her favorite node $\fav$. For this reason a possible swap will increase agent $y$'s distance cost. Therefore, since a swap has to be an improvement for all involved agents, we have $$\frac{|N^+(\pg(y))|}{k} <\frac{|N^+(\pg'(y))|}{k}.$$
Analogous the proof of Theorem \ref{thm:potentialgameSSG} we have to distinguish between two subcases:
\begin{itemize}
\item  Assume $x$ and $y$ are in different neighborhoods, so $x \notin N_w(\pg(y))$ and $y \notin N_w(\pg(x))$. It holds that \[|N^-(\pg'(x))| = |N^+(\pg(y))| <|N^+(\pg'(y))| = |N^-(\pg(x))|\]
and \[k - |N^-(\pg(y))| =|N^+(\pg(y))| <|N^+(\pg'(y))| = k - |N^-(\pg'(y))|.\]
\item   Assume $x$ and $y$ are in the same neighborhood, so $x \in N_w(\pg(y))$ and $y \in N_w(\pg(x))$. It holds that \[|N^-(\pg'(x))| - 1 = |N^+(\pg(y))| < |N^+(\pg'(y))| = |N^-(\pg(x))| - 1\] and \[k - |N^-(\pg(y))| =|N^+(\pg(y))| <|N^+(\pg'(y))| = k - |N^-(\pg'(y))|.\]
\end{itemize} 
Which leads immediately to \[|N^-(\pg'(x))| < |N^-(\pg(x))| \text{ and } |N^-(\pg'(y))| < |N^-(\pg(y))|.\]
Since a swap between two agents $x$ and $y$ just affects colored pairs where agent $x$ or $y$ are involved, we have$$\Phi(\pg) - \Phi(\pg')  =  (|N^-(\pg(x))| + |N^-(\pg(y))|) - (|N^-(\pg'(x))| + |N^-(\pg'(y))|) > 0.$$
The second case we have to consider is when both agents $x$ and $y$ are unhappy before the swap. If the distance cost is equal for both agents, which means that the swap does not change the respective distances to $\fav$, then we already showed in the proof of Theorem~\ref{thm:potentialgameSSG} that a swap would decrease the potential function. Otherwise one agent, without loss of generality agent $y$, would increase her distance cost. So we must have \[\frac{|N^+(\pg(y))|}{k} <\frac{|N^+(\pg'(y))|}{k},\] which leads to the previous case. 
\end{proof}

\begin{theorem}\label{thm:potentialgameSSGmorethanhalf}
For any $w \geq 1$ and any $\tau\geq \frac{1}{2}$ the SSG is a potential game on regular networks.
\end{theorem}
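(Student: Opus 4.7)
I would prove the statement by exhibiting an ordinal potential function that couples the colored-pair count used in Theorems~\ref{thm:potentialgameSSG} and~\ref{thm:potentialgameSSGsingle} with the total distance cost. Concretely, take
\[\Phi(\pg)=\bigl(C(\pg),\,D(\pg)\bigr),\quad C(\pg)=\tfrac{1}{2}\sum_{x\in A\cup B}|N^-(\pg(x))|,\quad D(\pg)=\sum_{x\in A\cup B}\bigl(\dist_G(\fav_x,\pg(x))+1\bigr),\]
compared lexicographically, and verify that every improving swap either strictly decreases $C$, or keeps $C$ fixed and strictly decreases $D$.

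Same-type swaps are immediate: they leave the type occupying each node of $G$ unchanged, so $C$ is preserved and the two swapping agents merely exchange their happiness ratios. Hence such a swap can be improving for both only if they had identical happiness status beforehand, and in every such scenario both must strictly reduce their distance to their favorite, giving $\Delta D<0$.

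The work goes into different-type swaps of $x\in A$ at $u$ and $y\in B$ at $v$. Put $a=|N^+(\pg(x))|$, $b=|N^-(\pg(x))|$, $c=|N^+(\pg(y))|$, $d=|N^-(\pg(y))|$, so regularity gives $a+b=c+d=k$, and a direct count of the colored pairs that change yields
\[\Delta C=2(a+c)-2k+2\cdot\mathbb{1}[v\in N_w(u)].\]
I would split on how many of $x,y$ are unhappy. If both are happy, the assumption $\tau\geq\tfrac{1}{2}$ is used crucially: pairing $a,c\geq\tau k$ with the ``stay happy'' conditions $d-\mathbb{1}\geq\tau k$ and $b-\mathbb{1}\geq\tau k$ against $a+b=k$ is infeasible unless $\tau=\tfrac{1}{2}$, $\mathbb{1}[v\in N_w(u)]=0$, and $a=b$, $c=d$; these equalities give $\Delta C=0$, and both agents must then strictly shrink their distances, so $\Delta D<0$. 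If only $y$ is unhappy (the ``only $x$ unhappy'' case being symmetric), then $x$ must stay happy and strictly decrease her distance, while $y$ either strictly improves her happiness ratio ($b-\mathbb{1}>c$, hence $a+c<k-\mathbb{1}$ and $\Delta C<0$) or keeps the same ratio and strictly decreases her distance ($b-\mathbb{1}=c$, so $a+c=k-\mathbb{1}$, $\Delta C=0$, and $\Delta D<0$). Finally, if both are unhappy, each agent's improvement is of one of two types -- (i) $|N^+|$ strictly increases, or (ii) $|N^+|$ is preserved with distance strictly improving -- and the same regularity bookkeeping shows that the $(\mathrm{i},\mathrm{i})$ case yields $\Delta C<0$, the $(\mathrm{ii},\mathrm{ii})$ case yields $\Delta C=0$ with $\Delta D<0$, and the mixed $(\mathrm{i},\mathrm{ii})$ and $(\mathrm{ii},\mathrm{i})$ cases are impossible.

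The main obstacle is exactly ruling out these mixed sub-cases. The identity $|N^+_{\mathrm{new}}|=|N^+_{\mathrm{old}}|$ for a ``type-(ii)'' agent translates into a tight linear equation such as $b=c+\mathbb{1}[v\in N_w(u)]$; combined with $a+b=c+d=k$ this forces $d-\mathbb{1}=a$ exactly, directly contradicting the strict inequality $d-\mathbb{1}>a$ demanded by a ``type-(i)'' partner. Once this bit of algebra is dispatched in each sub-sub-case, together with the boundary analysis at $\tau=\tfrac{1}{2}$ above, the dichotomy ``$\Delta C<0$, or $\Delta C=0$ and $\Delta D<0$'' holds on every improving swap, so $\Phi$ is an ordinal potential and the SSG on regular networks with $\tau\geq\tfrac{1}{2}$ is a potential game.
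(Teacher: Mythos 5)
Your proposal is correct and follows essentially the same route as the paper: the same lexicographic potential $\bigl(\tfrac{1}{2}\sum_x|N^-(\pg(x))|,\ \sum_x \dist_G(\fav_x,\pg(x))\bigr)$ with the same case split on which swapping agents are happy. Your explicit identity $\Delta C=2(a+c)-2k+2\cdot\mathbb{1}[v\in N_w(u)]$ and the resulting observation that on a regular network both agents' happiness ratios move in lockstep (ruling out the mixed sub-cases) is exactly the fact the paper invokes more informally from Theorem~\ref{thm:potentialgameSSG}, so your write-up is, if anything, a tighter version of the published argument.
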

\begin{proof}
We prove the theorem by showing that 
$$\Phi(\pg) = \left(\frac{1}{2} \sum_{x \in A\cup B} |N^-(\pg(x))|,\ \sum_{x \in A\cup B} \dist_G(\fav_x,\pg(x))\right)$$
is an ordinal potential function. We prove that if the agents $x$ and $y$ make an improving swap, the value of the potential function $\Phi$ decreases lexicographically.

Since we have regular networks, we have \[\forall x \in A \cup B\colon |N(\pg(x))| = |N^+(\pg(x))| + |N^-(\pg(x))| = k. \]
From the proof of Theorem \ref{thm:potentialgameSSG} we already know that whenever an agent $x$ who is involved in the swap improves her local happiness ratio, which means \[\frac{|N^+(\pg(x))|}{k} <\frac{|N^+(\pg'(x))|}{k},\] this always leads to a decrease of $\frac{1}{2} \sum_{x \in A\cup B} |N^-(\pg(x))|$. Also, from the definition of SSG we know that every agent tries first to maximize her local happiness ratio. This implies $\Phi$ decreases lexicographically.

So we just have to consider the case when two agents $x$ and $y$ swap only to decrease their distance cost.

When both agents $x,y \in A$(or $B$) swaps, the number of colored pairs stays the same, so the first entry of $\Phi$ doesn't change. Since the agents just swap when both decreases their distance cost, the sum over all distance cost decreases, which reduces the second entry of $\Phi$.

Now it can be observed that the two happy agents $x\in A$ and $y\in B$ can swap if $\tau=\frac{1}{2}$ where $N^-(\pg(x))=N^+(\pg(x))$ and $N^-(\pg(y))=N^+(\pg(y))$ and $x$ and $y$ are in different neighborhoods, so $x \notin N_w(\pg(y))$ and $y \notin N_w(\pg(x))$. In this swap, again the number of colored pairs stays the same, so the first entry of $\Phi$ doesn't change. However as a swap improves the distance cost for both of the agents thus, the sum over all distance cost decreases, which reduces the second entry of $\Phi$.  
 
If $\tau>\frac{1}{2}$ the happy agents of different types would not swap as both will become unhappy after the swap. 

Without loss of generality for $\tau\geq \frac{1}{2}$ we can assume $x\in A$ is unhappy. 

Agent $x$ swaps when $|N^+(\pg(x))|$ and $|N^-(\pg(x))|$ stays the same. Otherwise, if $x$ increases the number of agents of type $A$ on swap, so $|N^+(\pg(x))| < |N^+(\pg'(x))|$, we are in the case already discussed further above and if $|N^+(u)|$ decreases the swap is not an improvement for $u$.

Since $|N^+(\pg(x))| = |N^+(\pg'(x))|$ and $|N^-(\pg(x))| = |N^-(\pg'(x))|$, same holds for agent $y$ since the graph is regular. The number of colored pairs is unchanged. Again, both agents swap to decrease their distance to their favorite node which leads to a decrease in the sum over all distance cost, which reduces the potential function. 
\end{proof}

\begin{theorem}\label{thm:SSG-ring}
If $\tau<\frac{1}{2}$ and $w=1$ then the SSG on a ring is a potential game.
\end{theorem}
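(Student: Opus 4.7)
My plan is to reuse the lexicographic potential function
\[
\Phi(\pg) = \left(\tfrac{1}{2}\sum_{z\in A\cup B}|N^-(\pg(z))|,\ \sum_{z\in A\cup B} d_G(\fav_z,\pg(z))\right)
\]
introduced in the proof of Theorem~\ref{thm:potentialgameSSGmorethanhalf} and verify that it is still an ordinal potential under the weaker hypothesis $\tau<\tfrac{1}{2}$ once we restrict to a ring with window size $w=1$. The two features driving the argument are that every agent has exactly two neighbours (so $|N^+(\pg(z))|+|N^-(\pg(z))|=2$), and that since $\tau<\tfrac{1}{2}$, an agent is happy iff at least one of her two neighbours is of her type and unhappy iff both are of the opposite type.

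Given an improving swap between $x\in A$ and $y\in B$, I would parametrise by $n_x:=|N^+(\pg(x))|$ and $n_y:=|N^+(\pg(y))|$, both in $\{0,1,2\}$. In the non-adjacent case the swap sends $|N^+(\pg'(x))|$ to $2-n_y$ and $|N^+(\pg'(y))|$ to $2-n_x$; the same-type counts of all other agents are unchanged, so the change in the first coordinate of $\Phi$ equals $2(n_x+n_y)-4$. I would then go through the nine pairs $(n_x,n_y)$, discarding those that are not improving because some agent becomes strictly worse off (e.g.\ $(n_x,n_y)=(2,1)$ turns $x$ unhappy). The surviving improving swaps $(n_x,n_y)\in\{(0,0),(0,1),(1,0),(1,1),(0,2),(2,0)\}$ split into two groups: if $n_x+n_y<2$ the first coordinate of $\Phi$ strictly decreases; if $n_x+n_y=2$ both agents' happiness statuses are preserved, so the first component of each individual cost vector is unchanged and the only way both agents can strictly improve is by simultaneously reducing their distance to their favourite node, which strictly decreases the second coordinate of $\Phi$.

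The main obstacle I anticipate is the adjacent case, where the colored edge $xy$ contributes to both $|N^-(\pg(x))|$ and $|N^-(\pg(y))|$ before and after the swap, so the clean formula $2(n_x+n_y)-4$ no longer applies directly. The helpful observation is that adjacency already forces $n_x,n_y\leq 1$, so only $(n_x,n_y)\in\{(0,0),(0,1),(1,0)\}$ can possibly be improving; a direct enumeration of the four-vertex local pattern around $x$ and $y$ shows that $(0,0)$ strictly decreases the first coordinate of $\Phi$ while $(0,1)$ and $(1,0)$ leave it unchanged and force a strict distance improvement. Combining the two cases yields that $\Phi$ decreases lexicographically on every improving swap, which is precisely the required ordinal potential property.
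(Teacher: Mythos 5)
Your proposal is correct and follows essentially the same route as the paper: the identical lexicographic potential $\Phi(\pg) = \left(\tfrac{1}{2}\sum_{z}|N^-(\pg(z))|,\ \sum_{z} d_G(\fav_z,\pg(z))\right)$, with improving swaps split into those that change some agent's same-type count (first coordinate strictly drops) and pure distance-motivated swaps (first coordinate fixed, second drops); your explicit enumeration of the pairs $(n_x,n_y)$ and the formula $2(n_x+n_y)-4$, including the separate treatment of adjacent $x,y$, just replaces the paper's appeals to its earlier regular-network lemmas and is, if anything, more self-contained. The one case you omit is a swap between two agents of the \emph{same} type, which in the SSG (unlike the uniform version) can be improving when both merely move closer to their favorite nodes; this is dispatched in one line, since such a swap leaves the assignment of types to nodes, and hence the first coordinate of $\Phi$, unchanged, forces both agents' happiness terms to coincide, and therefore requires both distances to strictly decrease.
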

\begin{proof} We use an argument similar to the one in the proof of Theorem \ref{thm:potentialgameSSGmorethanhalf} and prove that the value of $\Phi$ decreases lexicographically with every improving swap, where $\Phi$ is defined as follows:
$$\Phi(\pg) = \left(\frac{1}{2} \sum_{x \in A\cup B} |N^-(\pg(x))|,\ \sum_{x \in A\cup B} \dist_G(\fav_x,\pg(x))\right).$$
Since $\tau < \frac{1}{2}$, an agent will become happy if she has at least one neighbor of her type. We have already shown in the proof of Theorem~\ref{thm:potentialgameSSG} that if a swap improves the local happiness ratio of an involved agent $x$ then this kind of swap always reduces the first entry of $\Phi$, thus $\Phi$ decreases lexicographically. 

Now we look at the cases when an agent reduces her distance cost by swapping. The case where at least one unhappy agent is involved in the swap is analogous to the proof of Theorem~\ref{thm:potentialgameSSGmorethanhalf}. Thus, we are left to consider the case where two happy agents swap. 

Two happy agents will swap if and only if they remain happy after the swap and if they get closer to their favorite node by swapping.  We show that after such a swap, the number of colored pairs stays the same. This implies that $\Phi$ decreases lexicographically by such a swap, since the first entry stays the same but the second entry decreases.

If two happy agents of the same type swap, then trivially nothing changes. If two happy agents $x$ and $y$ of different types swap, then both of them must have exactly one neighboring agent of their own type and of the other type, respectively and $x$ and $y$ have to be in different neighborhoods, so $x \notin N_w(\pg(y))$ and $y \notin N_w(\pg(x))$.  Thus, also in this case the number of colored pairs does not change.  
\end{proof}

\subsection{Dynamic Properties of the Jump Schelling Game}
Now we consider the JSG. Remember that in the JSG we assume that agents can only decrease their cost by jumping to empty nodes. Such a jump of an agent $x$ changes the current placement $\pg$ only in the location of agent $x$. We prove for the ring network that the u-JSG is a potential game. Furthermore we show that the cf-JSG and JSG are not potential games for different ranges of $\tau$.

\begin{theorem}\label{thm:JSG-ring}
If $w=1$ and the underlying graph is ring network then, the u-JSG is a potential game.
\end{theorem}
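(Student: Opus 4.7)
The plan is to exhibit a scalar ordinal potential function and verify that it strictly decreases along every improving jump. Concretely I would propose
\[
\Phi(\pg) \;=\; \tfrac12\sum_{x \in A \cup B}\bigl(|N^-(\pg(x))| - 2|N^+(\pg(x))|\bigr) \;=\; C(\pg) - 2M(\pg),
\]
where $C(\pg)$ is the number of colored pairs (ring-edges whose two endpoints are occupied by agents of different types) and $M(\pg)$ is the number of monochromatic pairs (same-type occupied edges). The summand $C$ is, up to a constant, the potential used in Theorems~\ref{theorem_potential_game}--\ref{thm:potentialgameSSGmorethanhalf} for swaps; the additional $-2M$ penalises the same-type adjacencies that can be created when an agent jumps to an empty node.

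First I would establish the local picture. On a ring with $w = 1$ every node has exactly two ring-neighbours, so any un-isolated agent has at most two non-empty neighbours and its happiness ratio $r$ lies in $\{0,\tfrac12,1\}$. Since jumping requires a strict decrease in $\cost_x = \max(0,\tau - r_x)$ (resp.\ $\tau$ if $x$ is isolated), every improving jump falls into one of five transition types for the jumping agent $x$: (A) isolated$\to r=\tfrac12$, (B) isolated$\to r=1$, (C) $r=0\to r=\tfrac12$, (D) $r=0\to r=1$, or (E) $r=\tfrac12\to r=1$, the last arising only for $\tau > \tfrac12$.

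Next I would compute $\Delta\Phi$. When $x$ jumps from $u$ to the empty node $v$, only ring-edges incident to $u$ or $v$ change status, so
\[
\Delta C = |N^-(\pg'(x))| - |N^-(\pg(x))|,\qquad \Delta M = |N^+(\pg'(x))| - |N^+(\pg(x))|.
\]
For each transition type the values of $|N^{\pm}|$ are read off directly: e.g.\ in (A) both are $0$ before and both equal $1$ after, yielding $\Delta\Phi = 1-2 = -1$; in (E) the old values are $(1,1)$ and the new values $(|N^+(\pg'(x))|,0)$ with $|N^+(\pg'(x))|\in\{1,2\}$, yielding $\Delta\Phi\in\{-1,-3\}$. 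A routine check of (B), (C), (D) gives $\Delta\Phi \leq -2$ in each case. Hence $\Phi$ strictly decreases by at least $1$ with every improving jump and is an ordinal potential function.

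The main obstacle I anticipate is pinning down the correct weighting. The SSG potential $C$ alone does not transfer to jumps, because in type (A) both a monochromatic and a colored edge are created and $C$ increases; the number of unhappy agents is also not monotone since a type (A) jump can turn a previously happy neighbour of $v$ unhappy. Conversely $-M$ alone fails in type (E) with exactly one same-type neighbour at $v$, since then $M$ is unchanged. The combination $C - 2M$ is exactly tight enough to handle both extremes simultaneously. A minor technicality is the case where $u$ and $v$ are adjacent on the ring: the shared edge $uv$ is half-empty both before and after the jump, so it contributes nothing to $\Delta C$ or $\Delta M$ and the above formulas remain valid.
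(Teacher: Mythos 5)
Your proof is correct, and it follows the same overall strategy as the paper (exhibit an edge-based ordinal potential and verify a finite local case analysis on the ring), but the potential function itself --- which is the creative heart of the argument --- is genuinely different. The paper weights each ring edge by $1$ if it is colored, by $\tfrac13$ if exactly one endpoint is empty, and by $0$ otherwise, and then checks a picture-by-picture enumeration of old/new neighborhoods; your $\Phi = C - 2M$ instead ignores empty-adjacent edges entirely and penalizes monochromatic occupied edges with weight $-2$. The two functions are not affine transformations of one another, yet both strictly decrease on every improving jump. Your version buys a cleaner verification: since only edges incident to the vacated node $u$ and the target node $v$ change status, one gets the closed form $\Delta\Phi = \left(|N^-(\pg'(x))| - |N^-(\pg(x))|\right) - 2\left(|N^+(\pg'(x))| - |N^+(\pg(x))|\right)$, and the improving-jump condition on a ring (happiness ratio strictly increases within $\{0,\tfrac12,1\}$) reduces the check to a handful of $(|N^+|,|N^-|)$ transitions, all giving $\Delta\Phi \le -1$; I verified these against the paper's full case list (a)--(d)$\times$(i)--(iii) and they all come out negative. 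Your treatment of the adjacent-$u,v$ technicality and your explanation of why $C$ alone and $-M$ alone each fail are also correct. The paper's $\tfrac13$-weight on half-empty edges plays the analogous balancing role, so neither potential is "more powerful" here, but yours is arguably the more transparent bookkeeping.
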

\begin{proof}
For any ring network $G = (V,E)$ we define the weight $w_e$ of any edge $e=(u,v)\in E$ as follows:
\[ w_e=\begin{cases} 
      1 &\text{, if $u$ and $v$ have  agents of different type}  \\
      \frac{1}{3} & \text{, if either $u$ or $v$ is empty} \\
      0 & \text{, otherwise.} 
   \end{cases}
\] 
Then we use the function
$\Phi(\pg)=\sum_{e\in E} w_e$ and prove that if any agent makes an improving jump to some other node in the ring then $\Phi$ decreases. 

An agent $x$ will jump to a new node if this improves her local happiness ratio. Let $\pg$ be the placement before the jump and let $\pg'$ be the induced placement after the jump. Thus, we have  
$$\frac{|N^{+}(\pg'(x))|}{|N^+(\pg'(x))|+|N^-(\pg'(x))|}> \frac{|N^{+}(\pg(x))|}{|N^+(\pg(x))|+|N^-(\pg(x))|}.$$ 
 Since $G$ is a ring, this implies that an unhappy agent jumps to a different node only if she increases the number of neighboring agents of her type.  Since the degree of every node in $G$ is $2$ the jump of an agent $x$ can affect the weight of at most four edges: the two edges incident to the node $\pg(x)$  and the two edges incident to the node $\pg'(x)$. Hence, the change in the weight of the involved edges equals the total change in potential function value $\Phi$. 

Without loss of generality let $x\in A$ be an unhappy agent who executes an improving jump. Depending on the neighboring nodes of $x$ we have the following cases:
\begin{itemize}
\item[(a)] Both neighboring agents of agent $x$ are of different type.
\item[(b)] Both neighboring nodes are empty.
\item[(c)] One neighboring node is empty and the other one is occupied by an agent of the other type.
\item[(d)] If one neighboring agent is of $x$'s type and the other is of the other type. (Only interesting for $\tau > \frac{1}{2}$.)
\end{itemize}
It is easy to check, than in all cases the potential function $\Phi$ decreases. See Fig.~\ref{fig:ring_network}.
\begin{figure}[h!]
\centering
\begin{subfigure}[c]{1.0\linewidth}
\centering
\includegraphics[width=0.8\linewidth]{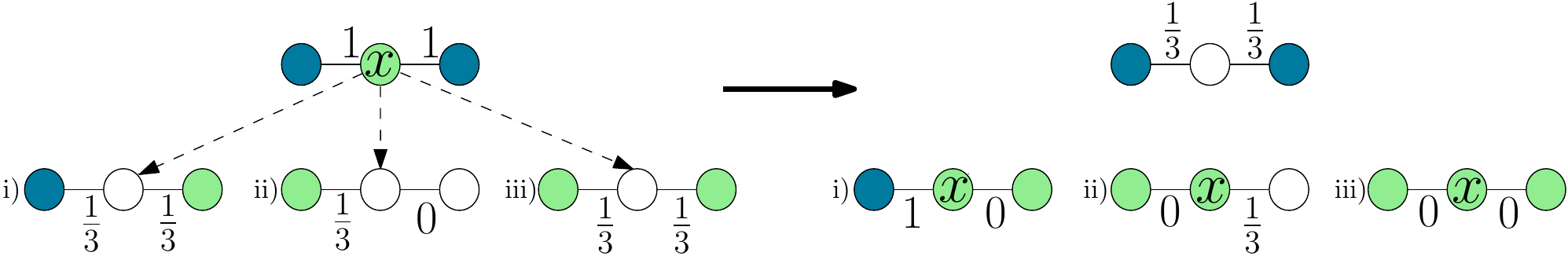}
\captionsetup{singlelinecheck=off}\subcaption{ \begin{itemize} \setlength{\itemindent}{5em}\vspace*{-0.5cm}
\item[(i)]$\Phi(\pg) - \Phi(\pg') = (1+1+\frac{1}{3}+\frac{1}{3}) - (\frac{1}{3}+\frac{1}{3}+1+0) = 1 $
\item[(ii)]$\Phi(\pg) - \Phi(\pg') = (1+1+\frac{1}{3}+0) - (\frac{1}{3}+\frac{1}{3}+0+\frac{1}{3}) = \frac{4}{3}$
\item[(iii)]$\Phi(\pg) - \Phi(\pg') = (1+1+\frac{1}{3}+\frac{1}{3}) - (\frac{1}{3}+\frac{1}{3}+0+0) = 2$
\end{itemize} }
\end{subfigure} 
\begin{subfigure}[c]{1.0\linewidth}
\centering
\includegraphics[width=0.8\linewidth]{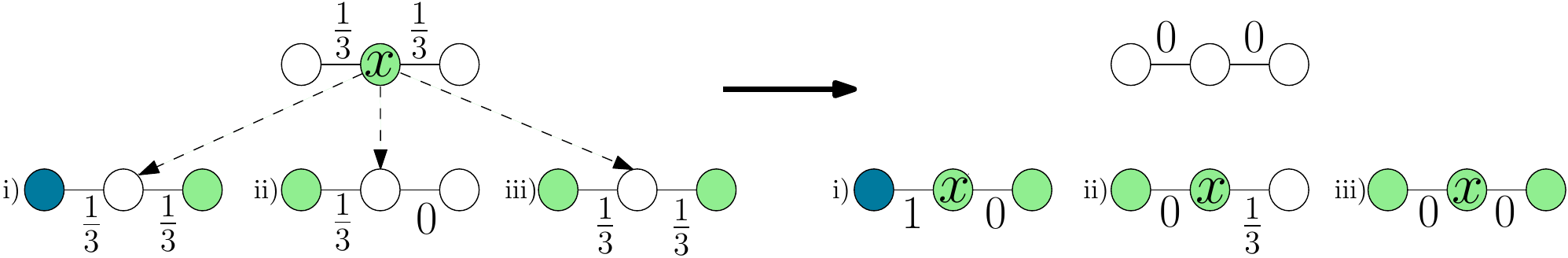}
\captionsetup{singlelinecheck=off}\subcaption{ \begin{itemize} \setlength{\itemindent}{5em} \vspace*{-0.5cm}
\item[(i)]$\Phi(\pg) - \Phi(\pg') = (\frac{1}{3}+\frac{1}{3}+\frac{1}{3}+\frac{1}{3}) - (0+0+1+0) = \frac{1}{3} $
\item[(ii)]$\Phi(\pg) - \Phi(\pg') = (\frac{1}{3}+\frac{1}{3}+\frac{1}{3}+0) - (0+0+0+\frac{1}{3}) = \frac{2}{3}$
\item[(iii)]$\Phi(\pg) - \Phi(\pg') = (\frac{1}{3}+\frac{1}{3}+\frac{1}{3}+\frac{1}{3}) - (0+0+0+0) = \frac{4}{3}$
\end{itemize} }
\end{subfigure} 
\begin{subfigure}[c]{1.0\linewidth}
\centering
\includegraphics[width=0.8\linewidth]{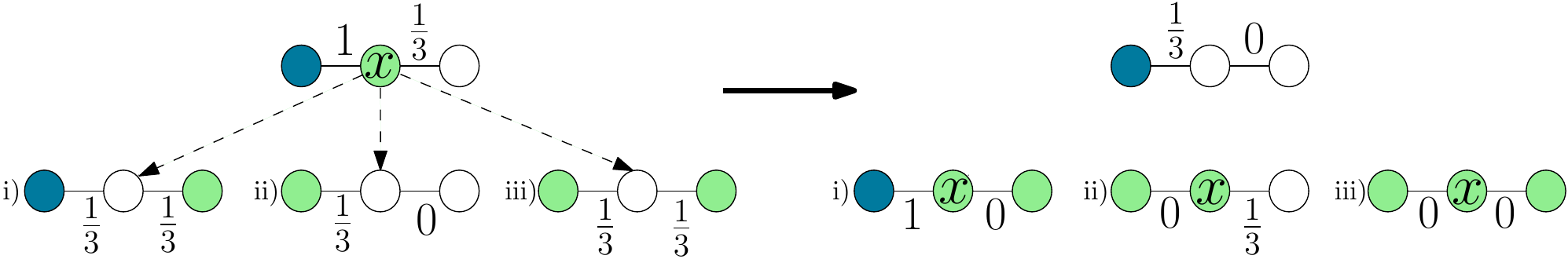}
\captionsetup{singlelinecheck=off}\subcaption{ \begin{itemize} \setlength{\itemindent}{5em} \vspace*{-0.5cm}
\item[(i)]$\Phi(\pg) - \Phi(\pg') = (1+\frac{1}{3}+\frac{1}{3}+\frac{1}{3}) - (\frac{1}{3}+0+1+0) = \frac{2}{3} $
\item[(ii)]$\Phi(\pg) - \Phi(\pg') = (1+\frac{1}{3}+\frac{1}{3}+0) - (\frac{1}{3}+0+0+\frac{1}{3}) = 1$
\item[(iii)]$\Phi(\pg) - \Phi(\pg') = (1+\frac{1}{3}+\frac{1}{3}+\frac{1}{3}) - (\frac{1}{3}+0+0+0) = \frac{5}{3}$
\end{itemize} }
\end{subfigure}
\begin{subfigure}[c]{1.0\linewidth}
\centering
\includegraphics[width=0.6\linewidth]{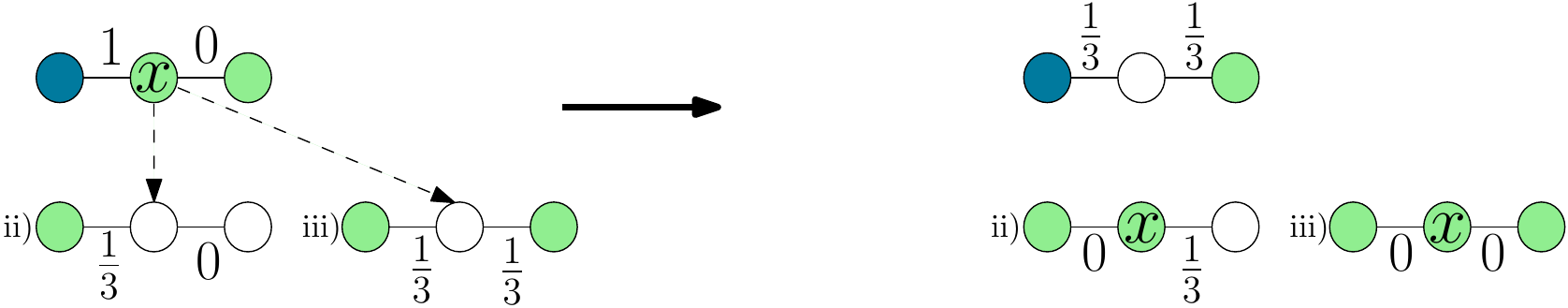}
\captionsetup{singlelinecheck=off}\subcaption{ \begin{itemize} \setlength{\itemindent}{5em} \vspace*{-0.5cm}
\item[(ii)]$\Phi(\pg) - \Phi(\pg') = (1+0+\frac{1}{3}+0) - (\frac{1}{3}+\frac{1}{3}+0+\frac{1}{3}) = \frac{1}{3}$
\item[(iii)]$\Phi(\pg) - \Phi(\pg') = (1+0+\frac{1}{3}+\frac{1}{3}) - (\frac{1}{3}+\frac{1}{3}+0+0) = 1$
\end{itemize} }
\end{subfigure}

\caption{Let $x \in A$ be an unhappy agent which makes an improving jump. We have the following possibilities for the new position of $x$: (i) One neighboring agent is of $x$'s type and the other is of the other type, (ii) one neighboring node is empty and the other one is occupied by an agent of $x$'s type and (iii) both neighboring agents of $x$ are of $x$'s type. $\Phi(\pg)=\sum_{e\in E} w_e$ decreases if $x$ makes an improving jump to some other node.}
\label{fig:ring_network}
\end{figure}
\end{proof}
 
\begin{theorem}\label{thm:IRC_cfJSG}
There cannot exist an ordinal potential function for the cf-JSG and the JSG for $\frac{1}{3} < \tau  \leq \frac{2}{3}$.
\end{theorem}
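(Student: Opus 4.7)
The plan is to prove non-existence of an ordinal potential by exhibiting explicit improving response cycles. Concretely, for each of the three sub-ranges $\tfrac{1}{3}<\tau<\tfrac{1}{2}$, $\tau=\tfrac{1}{2}$, and $\tfrac{1}{2}<\tau\leq\tfrac{2}{3}$ appearing in Table~\ref{overviewtable}, I would build a small placement on a suitable graph together with a sequence of jumps that begins and ends at the same placement while every individual jump is strictly improving with respect to the lexicographic cost order. The mere existence of such an IRC immediately precludes any ordinal potential, since $\Phi$ would have to strictly decrease along each arrow yet return to its initial value after traversing the loop.

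For the cf-JSG the cycles are driven by the tension between the two components of $\cost_x$. With a single common favorite node $\fav$, an agent is willing to jump to an empty node that is strictly closer to $\fav$ whenever this does not worsen her happiness term, and an unhappy agent is willing to jump to any empty node that strictly improves her local happiness ratio. I would place three or four agents of each type on a small graph (for instance a short path with a branch at $\fav$, or a small cycle with one pendant vertex) leaving exactly one empty slot, and then chase through roughly four moves: step~(i) an agent jumps nearer $\fav$ into a slot where she remains happy; step~(ii) an oppositely-typed agent next to the just-vacated slot now finds either her happiness or her distance improved by moving into it; steps~(iii) and (iv) similarly exploit the newly empty positions until the placement coincides with the starting one. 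For the JSG the same underlying graph and move sequence are reused, but the common favorite is replaced by individually-assigned $\fav_x$'s chosen so that each jumping agent is always moving one step closer to her personal favorite along the cycle.

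The main obstacle is \emph{parameter hygiene}. Each happiness comparison in the cycle has the form $\tfrac{p}{p+q}\lessgtr\tau$ for small integers $p,q$ coming from the $w$-neighborhood sizes. For a single construction to work uniformly for all $\tau$ in a given sub-range, every such comparison must fall on the correct side of $\tau$ throughout that whole interval, and the endpoints $\tfrac{1}{3}$, $\tfrac{1}{2}$, $\tfrac{2}{3}$ must appear as the critical ratios separating a valid step from an invalid one. This is precisely why the theorem splits along these three thresholds: the construction needs to pick the neighborhood sizes (and hence the graph) so that the forbidden and required ratios lie exactly on opposite sides of $\tau$ in the intended range.

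Finally, I would verify the cycle by computing, for each of the roughly four steps, the cost vectors before and after the jump and checking the lexicographic strict decrease; the same table of comparisons then serves as the correctness check both for the cf-JSG (with the common favorite) and for the JSG (after reassigning favorites individually). Since a single IRC suffices to rule out an ordinal potential, exhibiting one small cycle per sub-range completes the proof.
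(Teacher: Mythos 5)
Your overall strategy is exactly the paper's: exhibit an explicit improving response cycle for each sub-range of $\tau$, and observe that any IRC rules out an ordinal potential. You also correctly identify the driving mechanism (the tension between the happiness term and the distance term under the lexicographic order) and the reason the statement is confined to $\frac{1}{3}<\tau\leq\frac{2}{3}$ (the critical ratios $\frac{p}{p+q}$ arising from small neighborhoods must straddle $\tau$ uniformly over each sub-range). Two of your choices are harmless overkill: the paper needs only two cases, $(\frac{1}{3},\frac{1}{2}]$ and $(\frac{1}{2},\frac{2}{3}]$, since $\tau=\frac{1}{2}$ folds into the first; and for the JSG no reassignment of individual favorites is needed at all, because the cf-JSG is literally a special case of the JSG, so any cf-IRC is already a JSG-IRC.

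The genuine gap is that you never actually produce a cycle. The entire mathematical content of this theorem is the concrete construction --- the specific grid, the specific placement, and the step-by-step verification that each jump strictly decreases the jumping agent's cost vector for every $\tau$ in the sub-range --- and your proposal replaces all of this with ``I would build\ldots'' and ``I would verify\ldots''. This is not a formality: the choreography is delicate. In particular, your sketch of ``roughly four moves'' glosses over the hardest part, namely how the agents get back to their starting positions while every return move is still strictly improving. A happy agent who moved closer to $\fav$ cannot simply move back (that would increase her distance cost), so the cycle must be arranged so that some \emph{other} agent's jump renders her unhappy, at which point the happiness term dominates and she improves by retreating. The paper's cycles need six carefully ordered jumps precisely to set up these forced retreats, and the existence of a configuration in which all six comparisons $\frac{p}{p+q}\lessgtr\tau$ land on the correct side simultaneously for the whole interval is exactly what must be demonstrated, not asserted. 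As it stands, the proposal is a correct plan but not a proof.
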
 
\begin{proof}
We prove the statement by giving two examples of an improving response cycle for the cf-JSG on a grid. 
Since the cf-JSG is a special case of the JSG the statement holds for both variants.
\begin{itemize}
\item[(a)] $\mathbf{\tau\in (\frac{1}{3}, \frac{1}{2} ]}:$ Consider Fig.~\ref{fig:best_response_cycle_same} with $\tau\in (\frac{1}{3}, \frac{1}{2}]$.  The improving response cycle consists of six steps. We have green and blue agents and several (gray) empty nodes and the tolerance parameter $\tau$. All agents have the same favorite node shown in purple. In the first step the blue agent $a$ is unhappy and has cost of $(\tau-\frac{1}{3},3)$. By jumping next to node $b$ she becomes happy and reduces her cost to $(0,4)$. The green agent $y$ is happy and by jumping next to node $z$ and $b$ she gets closer to the common favorite node and remains happy as $\tau\in (\frac{1}{3}, 0.5]$. Thus $y$ reduces her cost to $(0,2)$. Now the green agent $z$ jumps to the common favorite node. The blue agent $a$ jumps back to her old node and reduces her cost from $(0,4)$ to $(0,3)$. Because of the movement of the blue agent the green agent $y$ becomes unhappy and jumps back to her old node and has cost $(0,3)$. In the next step the green agent $z$ jumps also back to her old node and reduces her cost from $(\tau,1)$ to $(0,2)$ which brings us in the same situation as in the beginning.
\begin{figure}[h!]
\centering
\includegraphics[width=0.80\textwidth]{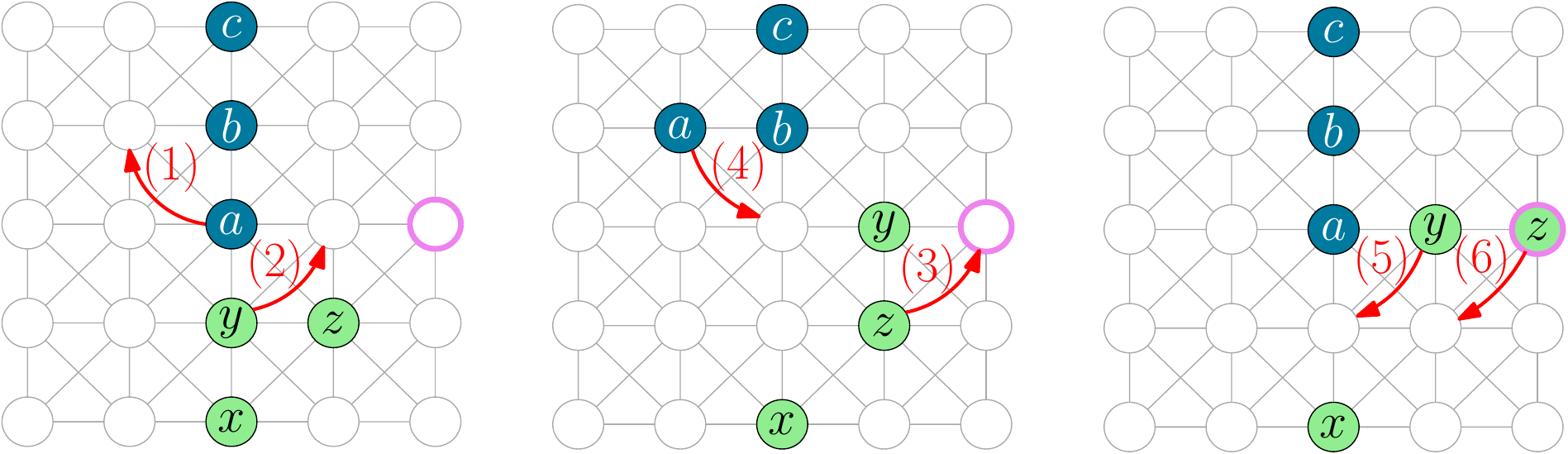}
\caption{\label{fig:best_response_cycle_same} An improving response cycle for the cf-JSG for $\tau\in (\frac{1}{3},\frac{1}{2}]$. Agents of type $A$ are blue, type $B$ agents are green. The common favorite node is purple.}
\end{figure}
\item[(b)] $\mathbf{\tau \in (\frac{1}{2},\frac{2}{3}]}:$ Consider Fig.~\ref{fig:best_response_cycle_same1} with $\tau \in (\frac{1}{2},\frac{2}{3}]$. The improving response cycle consists of six steps. We have green and blue agents and several (gray) empty nodes. All agents have the same favorite node shown in purple. In the first step the blue agent $a$ is unhappy and has cost of $(\tau-\frac{1}{2},3)$. By jumping next to node $b$ she becomes happy and reduces her cost to $(0,4)$. The green agent $y$ is happy and by jumping next to node $z$ she gets closer to the common favorite node and reduces her cost to $(0,2)$. Now the green agent $z$ jumps to the common favorite node. The blue agent $a$ jumps back to her old node and reduces her cost from $(0,4)$ to $(0,3)$. Because of the movement of the blue agent the green agent $y$ becomes unhappy and jumps back to her old node and has cost $(0,3)$. In the next step the green agent $z$ jumps also back to her old node and reduces her cost from $(\tau-\frac{1}{2}, 1)$ to $(0,2)$ which brings us in the same situation as in the beginning.
The improving response cycle for $\frac{1}{2}<\tau\leq  \frac{2}{3}$ can be found in Fig.~\ref{fig:best_response_cycle_same1}.
\begin{figure}[h!]
\centering
\includegraphics[width=0.95\textwidth]{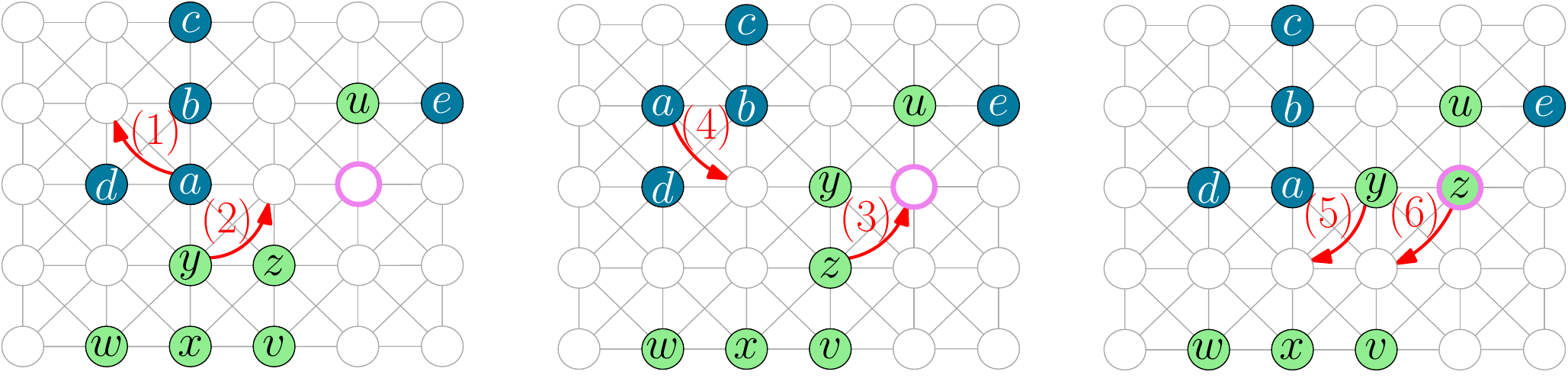}
\caption{\label{fig:best_response_cycle_same1} An improving response cycle for the cf-JSG for $\tau\in (\frac{1}{2},\frac{2}{3}]$. Agents of type $A$ are blue, type $B$ agents are green. The common favorite node is purple.}
\end{figure}
\end{itemize}  
\end{proof}

\section{Efficiency of Stable Placements}\label{sec:efficiency} 
In this section we investigate the properties of stable placements. In particular, we investigate their (in-)efficiency.

We start with proving that stable placements exist for many of our versions.
\begin{theorem}\label{thm:stable}
Stable placements exist for the u-SSG, the cf-SSG and the u-JSG.
\end{theorem}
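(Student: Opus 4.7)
The plan is to combine the potential-game results already proved with direct existence arguments for the cases not yet covered, treating each of the three variants (u-SSG, cf-SSG, u-JSG) separately.

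First, I would observe that for every configuration already shown to admit an ordinal potential function, existence of a stable placement is immediate. This handles the u-SSG for $\tau\le\tfrac12$ on arbitrary $G$ (Theorem \ref{theorem_potential_game}) and for any $\tau$ on regular $G$ (Theorem \ref{thm:potentialgameSSG}); the cf-SSG on regular $G$ (Theorem \ref{thm:potentialgameSSGsingle}); and the u-JSG on the ring with $w=1$ (Theorem \ref{thm:JSG-ring}). In each of these cases the placement space is finite, so the FIP guarantees termination of any improving sequence and the terminal placement is by definition stable.

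For the remaining configurations---the u-SSG with $\tau>\tfrac12$ on non-regular $G$, the cf-SSG on non-regular $G$, and the u-JSG on arbitrary $G$---my plan is an extremality argument. I would pick a placement $\pg^*$ that lexicographically minimizes the social cost vector of the game: for the (u-)SSG this is $(\#\{\text{unhappy agents}\},\,\sum_x\dist_G(\fav_x,\pg(x)))$, and for the u-JSG just the number of unhappy agents (since the uniform variant has no distance component). Such a $\pg^*$ exists by finiteness of the placement space. To show $\pg^*$ is stable I would suppose for contradiction that an improving swap (resp. jump) exists and then use the structural analyses from the proofs of Theorems \ref{theorem_potential_game}--\ref{thm:JSG-ring}---in particular the case split according to whether the participants' $w$-neighborhoods overlap---to trace the net change in the extremal quantity and derive the desired contradiction with the minimality of $\pg^*$.

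The hard part, and the main obstacle, is the bookkeeping for the side effects of an improving move: a swap or jump changes the local happiness ratio not only of the participating agents but also of every agent in the affected neighborhoods, so a previously happy agent can become unhappy and vice versa. In the regular-graph and $\tau\le\tfrac12$ potential-game proofs these side effects balance out cleanly, but in the general case they need not. I expect to need a tie-breaking refinement of the extremal quantity (for instance, appending a degree-weighted term analogous to the weighted edge potential used in the proof of Theorem \ref{thm:JSG-ring}, or a further lex-component summing the individual happiness deficits $\max(0,\tau-|N^+|/(|N^+|+|N^-|))$) so that any improving move is forced to strictly decrease the extremal vector despite the neighborhood side effects. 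The u-SSG with $\tau>\tfrac12$ on non-regular networks will be the trickiest sub-case, since the degree variation is exactly what breaks the colored-pairs argument of Theorem \ref{thm:potentialgameSSG}.
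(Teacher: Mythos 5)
Your first step (FIP $\Rightarrow$ existence in the cases already covered by Theorems \ref{theorem_potential_game}--\ref{thm:JSG-ring}) is fine, but the theorem also covers the u-SSG with $\tau>\frac{1}{2}$ on non-regular networks, the cf-SSG on non-regular networks, and the u-JSG on arbitrary networks with arbitrary $w$, and for exactly these cases your plan has a genuine gap. The extremality argument needs the following: any improving move available at the lexicographic minimizer of your extremal quantity must strictly decrease that quantity. This is precisely what fails. An improving swap or jump can turn previously happy third parties unhappy, so it can increase the number of unhappy agents; and the paper's own remark (Fig.~\ref{fig:potential_function}) exhibits an improving swap on a non-regular network under which the colored-pairs count \emph{increases}, so the most natural candidates for your ``tie-breaking refinement'' are already ruled out. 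You acknowledge this obstacle but leave it unresolved, and it is not mere bookkeeping: finding a quantity that strictly decreases under every improving move of the u-JSG on general graphs (or of the SSG with $\tau<\frac{1}{2}$) is exactly the potential-function question the paper explicitly lists as open. The crux of your proof is therefore missing, with no indication that it can be filled in along the proposed lines.

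The paper avoids potentials and extremality entirely and instead constructs a stable placement directly. For each variant it places all type-$A$ agents sequentially and greedily: the first at an endpoint of a diameter path (or at the common favorite node for the cf-SSG), each subsequent one at a node maximizing its local happiness ratio (and, for the cf-SSG, closest to the favorite node among such nodes); the type-$B$ agents then occupy the remaining nodes (for the u-JSG they are placed by the same greedy rule starting from the opposite diameter endpoint). One then argues that every type-$A$ agent is either happy or already has the maximum achievable local happiness ratio, hence has no improving deviation, and that consequently no swap (which must benefit both participants) or jump can be improving. If you want a complete proof, this direct construction is the route to take; your approach would first require resolving the paper's open convergence questions.
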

\begin{proof}
We prove the existence separately.
\begin{itemize}
 \item[u-SSG:] Let $u \in V$ be a node, such that $\exists v \in V$ with $d_G(u,v) = D(G)$. Place all the agents of type $A$ sequentially in the following way: (1) Agent $a_1$ is placed at node $u$, (2) Place an agent $a_i$ at the node $u_i$ which has maximum value for $\frac{|N_w^+(\pg(a_i))|}{|N_w(u_i)|}$, 
(3) The agents of type $B$ are placed on the remaining nodes of $G$.  No happy agent will swap as it is not an improving move. Moreover there is no unhappy player of type $A$ who want to swap since the procedure ensures that all the agents of type $A$ have maximum possible local happiness ratio. Thus the placement is stable.
\item[cf-SSG:] The procedure of placement is analogue to the one we introduced for the u-SSG. The only difference is that we now consider the favorite node $u$ in $G$. Again we place all the agents of type $A$ sequentially in the following way: (1) Place agent $a_1$ at node $u$, (2) At step $i$, place an agent $a_i$ at the node $u_i$ which is closest to node $u$ and has maximum value for $\frac{|N_w^+(\pg(a_i))|}{|N_w(u_i)|}$. Then all the agents in $B$ are placed on the remaining nodes.  

It holds that every agent of type $A$ is either happy and closest to her favorite node or unhappy with maximum possible local happiness ratio. Thus no agent of type $A$ will swap. Since all agents have the same favorite node, no agents from $B$ will swap. Therefore we have a stable placement.
\item[u-JSG:] Let $u$ and $v$ be the nodes in the network $G$ such that $d(u,v)=D(G)$. Place all the agents of type $A$ sequentially in the following way: (1) Agent $a_1$ is at node $u$. (2) Place an agent $a_i$ on node $u_i$ which has maximum value for $\frac{|N_w^+(\pg(a_i))|}{|N_w(u_i)|}$ at iteration $i$.
Then all the agents in $B$ are sequentially placed in an analogous way starting from the node $v$ with using the local happiness ratio for each agent.  

All agents in $A$ either are happy or they are unhappy and have the maximum local happiness ratio. The same holds for agents in $B$. Thus, the placement is stable.
\end{itemize}
\end{proof}
Now we move on to proving basic properties of stable placements.
\begin{theorem}\label{thm:unhappy-property}
Let $\pg^{\NE}$ be a stable placement for the SSG on some graph $G$. The following statements hold:
\begin{itemize}
\item[(a)] If $D(G)>1$ and $\tau \leq\frac{1}{2}$ then for any $\pg^{\NE}$, at most one type of agents can be unhappy. Moreover, there exists a stable placement with unhappy agent(s).
\item[(b)] If $\tau> \frac{1}{2}$ there exist stable placements $\pg^{\NE}$ with unhappy agents of both types.
\item[(c)] For the $SSG$ there is a graph $G$ such that $\pg^{\NE}$ has a better total distance cost than the socially optimal placement. 
\item[(d)] For the $SSG$ with $\tau> \frac{1}{2}-\frac{1}{2n}$, for $n=|A|=|B|$, there exists a graph $G$ such that there is no placement where at least one agent is happy.
\end{itemize} 
\end{theorem}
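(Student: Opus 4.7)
For part (a), the plan is to argue by contradiction. Suppose a stable placement $\pg^{\NE}$ contains both an unhappy $A$-agent $x$ and an unhappy $B$-agent $y$. Since $\tau\leq\tfrac{1}{2}$, both satisfy $|N^+_w(\pg(\cdot))|<|N^-_w(\pg(\cdot))|$, as in the proof of Theorem~\ref{theorem_potential_game}. In the clean case where $\pg(x)\notin N_w(\pg(y))$, a swap exchanges the two local compositions, so each agent's new same-type ratio becomes $|N^-_w|/(|N^+_w|+|N^-_w|)$ measured at the other's old position, which exceeds $1-\tau\geq\tau$; both agents become happy, both strictly improve their cost, and the swap contradicts stability. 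When every opposite-type unhappy pair lies within distance $\leq w$ of each other, the exchange identity is perturbed by a $\pm1$ correction because one swapped agent now sits in the neighborhood of the other; here the plan is to use $D(G)>1$ to pick an unhappy pair outside a common $w$-ball, reducing to the clean case. For the moreover clause I will give an explicit construction of a stable placement with at least one unhappy agent, e.g.\ a small graph on which one type is trapped in a cluster from which no candidate swap is profitable for both parties simultaneously.

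For part (b), the plan is to present the placement $A,A,B,B$ around the cycle $C_4$ with $\tau>\tfrac{1}{2}$: every agent has exactly one same-type and one opposite-type neighbor, so each local ratio equals $\tfrac{1}{2}<\tau$ and everyone is unhappy. Same-type swaps are inert, each of the two adjacent opposite-type swaps merely rotates the $AABB$ pattern and leaves every ratio at $\tfrac{1}{2}$, and the unique diagonal opposite-type swap drops at least one ratio to $0$. Hence no pair can jointly improve, the placement is stable, and agents of both types are unhappy. For part (c), I will construct a graph $G$ with a geometric bottleneck together with a specific assignment of $\fav_x$ such that reaching $0$ unhappy agents (as the lex-first social optimum demands) forces several agents far from their favorites, while a suitable stable placement keeps those agents close to their favorites at the price of being unhappy; stability follows because any candidate swap between a happy far-placed agent and an unhappy near-placed agent strictly worsens the happy agent on the distance coordinate without compensation on the happiness coordinate. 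The second entry of $\cost(\pg^{\NE})$ will therefore be strictly smaller than that of the social optimum.

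For part (d), the plan is to take $G=K_{2n}$. Since $D(G)=1$, every agent's $w$-neighborhood equals $V$ minus her own node, so each $A$-agent always has exactly $n-1$ same-type and $n$ opposite-type neighbors, giving the fixed ratio $\tfrac{n-1}{2n-1}$ independent of the placement; under the hypothesized lower bound on $\tau$ this ratio is strictly below $\tau$, so every agent is unhappy in every placement. The main obstacle I anticipate is the adjacent-swap subcase in part (a): the clean exchange computation is off by $\pm1$ whenever $\pg(x)$ and $\pg(y)$ share a common $w$-ball, so the proof must invoke $D(G)>1$ to reduce to a non-adjacent pair. A further delicate point is matching the threshold $\tfrac{1}{2}-\tfrac{1}{2n}$ in part (d) exactly; if the raw bound achievable by $K_{2n}$ does not already suffice, a small modification of the graph on $2n$ vertices will be needed to tighten the achievable ratio.
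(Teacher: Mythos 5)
Your route coincides with the paper's on all four parts: (a) by contradiction via a swap of two unhappy agents of opposite types, (b) and the ``moreover'' of (a) by exhibiting explicit stable placements, (c) by a construction that trades distance for happiness, and (d) by the clique $K_{2n}$ with its fixed local ratio $\tfrac{n-1}{2n-1}$. Your $C_4$ example with pattern $AABB$ for (b) is a valid concrete witness once you additionally place every agent at her favorite node, so that no swap can compensate on the distance coordinate. Your observation on (d) is also well taken: $\tfrac{n-1}{2n-1}=\tfrac12-\tfrac1{2(2n-1)}$ is strictly larger than the stated threshold $\tfrac12-\tfrac1{2n}$, so the clique only witnesses the claim for $\tau>\tfrac{n-1}{2n-1}$; the paper glosses over exactly this point.

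Two genuine gaps remain. First, in (a) your proposed repair of the adjacent case does not work: $D(G)>1$ says nothing about the particular unhappy agents, and it can happen that \emph{every} unhappy $A$-agent lies within distance $w$ of \emph{every} unhappy $B$-agent, so no ``clean'' pair exists to fall back on. The adjacent case is not a technicality one can route around: take the star $K_{1,4}$ with $w=1$, $\tau=\tfrac12$, type $A$ on the center and one leaf, type $B$ on the other three leaves. The center and all three $B$-leaves are unhappy, yet no swap is improving (the center agent would land on a leaf whose unique neighbor is then of type $B$, dropping her ratio from $\tfrac14$ to $0$, and the happy leaf $A$-agent never strictly improves), so the exchange argument genuinely fails there; the paper's own proof silently performs only the non-adjacent exchange, so you have correctly located the weak point, but ``pick a non-adjacent pair'' does not close it. Second, the ``moreover'' clause of (a) and all of (c) are existence statements, so a described intention to build a witness is not a proof: you must exhibit a concrete graph and favorite-node assignment and, for (c), verify stability against \emph{all} candidate swaps (including swaps between two happy agents and between two unhappy agents), not only the happy-versus-unhappy pairs you mention. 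The paper discharges both of these with explicit small examples.
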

\begin{proof}
\begin{itemize}
\item[(a)]We prove the first statement by contradiction. Assume that there exists a stable placement $\pg^{\NE}$ where $x\in A$ and $y\in B$ are unhappy. It holds that $$|N_w^+(\pg^{\NE}(x))| < \frac{1}{2} < |N_w^-(\pg^{\NE}(x))|$$ and $$|N_w^+(\pg^{\NE}(y))| < \frac{1}{2} < |N_w^-(\pg^{\NE}(y))|,$$ respectively. Thus, a swap between $x$ and $y$ will ensure that both agents are happy afterwards and therefore strictly decrease their cost. This contradicts the fact that $\pg^{\NE}$ is stable. 
See Fig.~\ref{fig:unhappy1} for an example of a stable placement with an unhappy agent. The placement is stable since all but one agents are content with $\pg$.
\begin{figure}[h!]
\centering
\begin{subfigure}[c]{0.48\textwidth}
\centering
\includegraphics[width=0.48\textwidth]{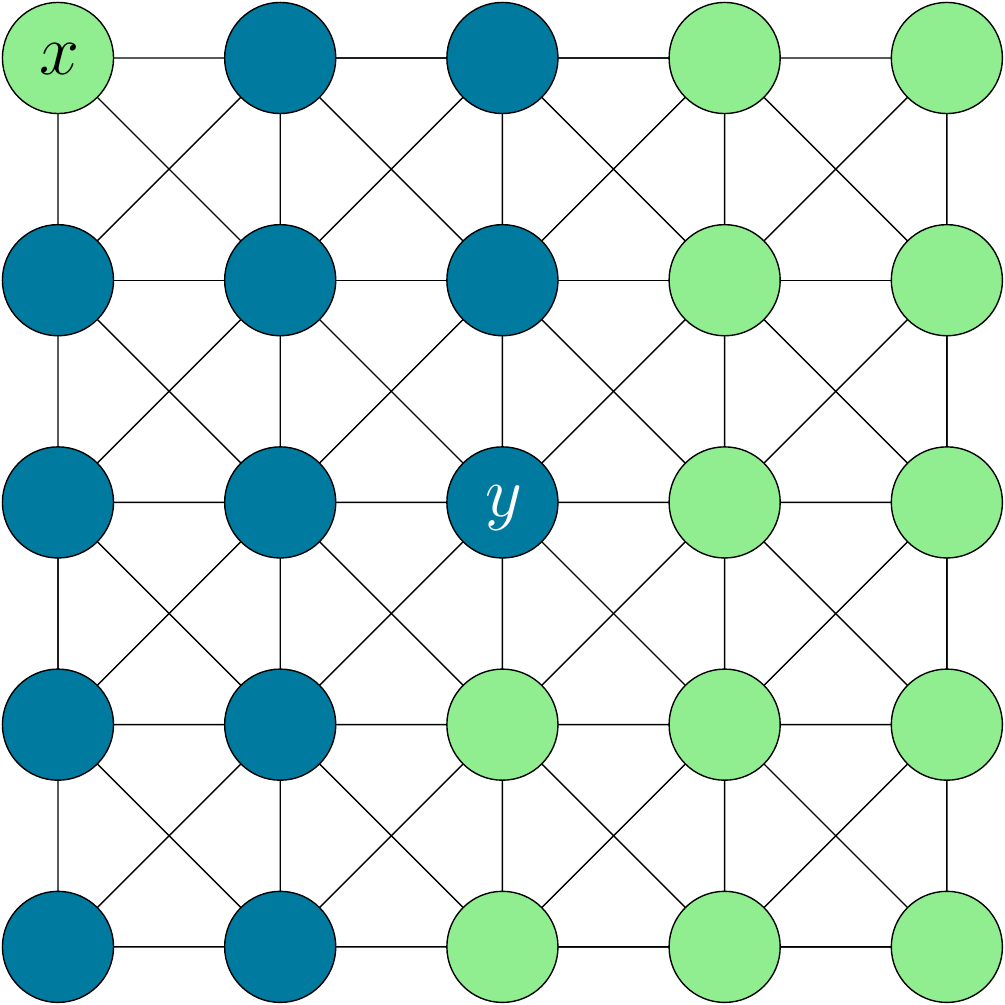}
\subcaption{$G$ with $\pg$, where every agent is at her favorite node. Agent $x$ is unhappy. }\label{fig:unhappy1} 
\end{subfigure}
~~~
\begin{subfigure}[c]{0.48\textwidth}
\centering
\includegraphics[width=0.48\textwidth]{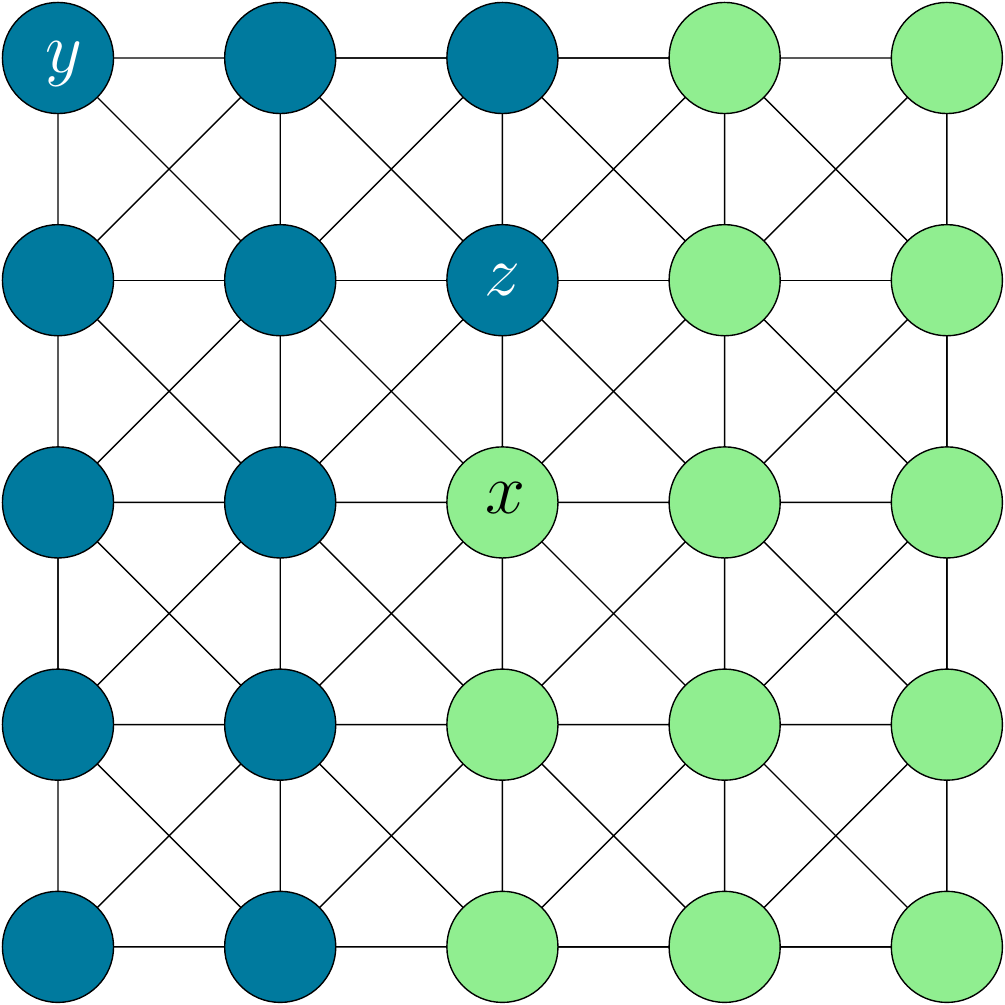}
\caption{Another placement for the instance shown in (a).
}\label{fig:happy}
\end{subfigure} 

 \begin{subfigure}[c]{0.30\textwidth}
\centering
\includegraphics[scale=0.35]{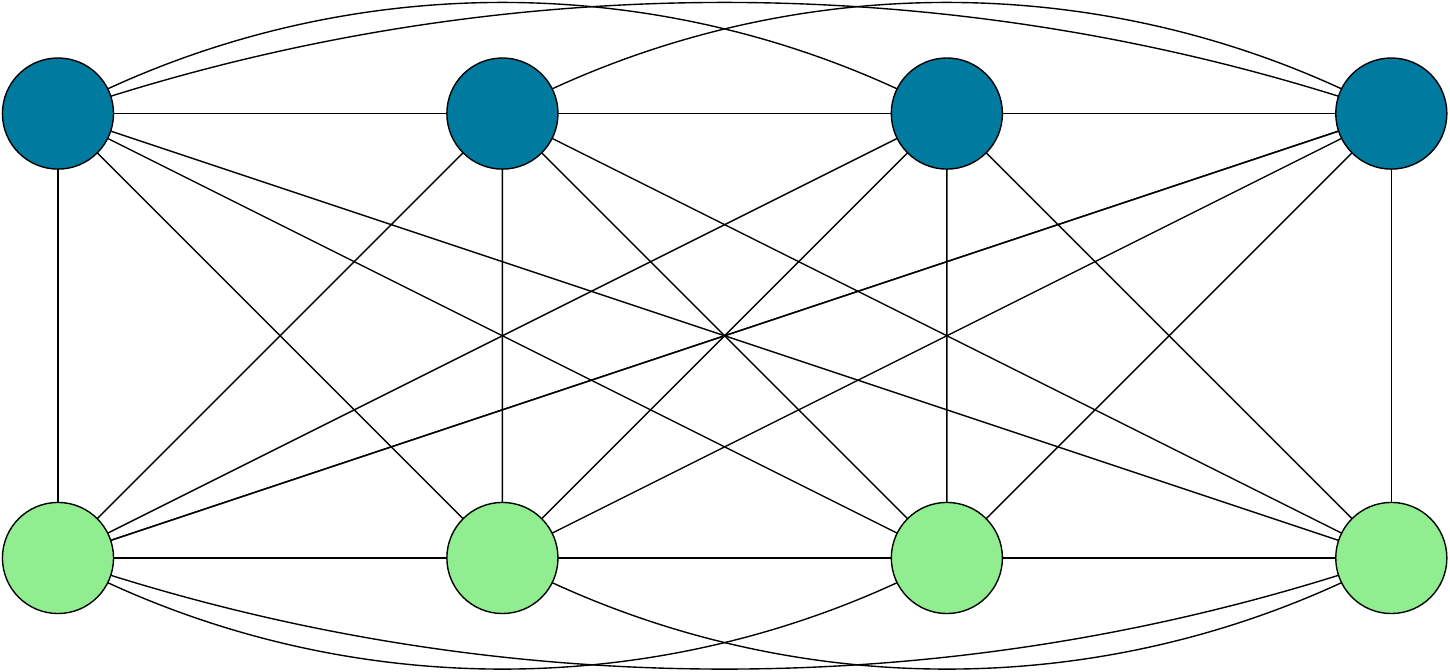}
\caption{$\tau >\frac{1}{2} - \frac{1}{8}$: In the $8$-clique $C$ in every placement every agent is unhappy. }\label{fig:clique}
\end{subfigure}

\caption{Examples of stable placements.} 
\end{figure}

\item[(b)] See Fig.~\ref{fig:happy} for an example of a stable placement. Agent $x$ and $z$ are unhappy. However no swap is possible.
\item[(c)] For $\tau \leq \frac{1}{2}$ all agents in the placement shown in Fig.~\ref{fig:happy} are happy. This placement is also a social optimum placement. Note, that the sum of distances from all agents to their favorite node is strictly smaller in the placement shown in Fig.~\ref{fig:unhappy1}.
\item[(d)]If we consider the $2n$-clique as the underlying network with $\tau>0.5-\frac{1}{2n}$ such that $|A|=|B|$, (cf. Figure \ref{fig:clique}). Independent of the  placement on $G$ every agent will be unhappy since always more than half of the neighborhood are of a different type.      
\end{itemize} 
\end{proof}
\noindent Remember that the social cost $\cost(\pg)$ of a placement $\pg$ is a vector. Hence we cannot use the state-of-the-art notions for the Price of Anarchy or the Price of Stability for investigating the efficiency of stable placements. For this, we first introduce suitable measures.
\begin{definition}
The \emph{ratio of happiness $\textit{RoH}$} and the \emph{ratio of distance $\textit{RoD}$} of two arbitrary placements $\pg$ and $\tilde\pg$ with $\textrm{Dist}(\pg'')= \sum_{u \in A\cup B} \dist_G(\fav_u,\pg''(u)) + 1$ is
\begin{align*}
 \RoH(\pg,\tilde\pg)  =\frac{\#\textrm{of unhappy players in } \pg+1}{\#\textrm{of unhappy players in } \tilde\pg+1} \textnormal{ and }
  \RoD(\pg,\tilde\pg)  =\frac{\textrm{Dist}(\pg)}{\textrm{Dist}(\tilde\pg)}.
\end{align*}
\end{definition}
Note that the $\RoH$ has the additional ``+1'' terms to handle the case where all agents in placement $\tilde{\pg}$ are happy. Essentially the $\RoH$ compares the social cost vectors of $\pg$ and $\tilde{\pg}$ by their first entry, and the $\RoD$ by their second entry. The $\RoD(\pg,\tilde\pg) = 1$ in the versions of u-SSG and u-JSG. 

We now define our notion of $\mathit{PoA}$ and $\mathit{PoS}$. 
\begin{definition}
For a given underlying network $G$, let $\NE$ be the set of all possible stable placements in $G$ and $\pg^{opt}$ be the socially optimal placement. Then,
\begin{align*}
 \mathit{PoA}&=\max_{\pg\in\NE}\{(\RoH\left(\pg,\pg^{\opt}\right), \RoD\left(\pg,\pg^{\opt}\right))\},\\
 \mathit{PoS}&=\min_{\pg\in\NE}\{(\RoH\left(\pg,\pg^{\opt}\right), \RoD\left(\pg,\pg^{\opt}\right))\}.
\end{align*}
\end{definition} 
\begin{observation}\label{obs:PoA}
For any $G$ we have $\mathit{PoA} \leq (n+1, D(G)-1)$ and $\mathit{PoS}\geq (1,1)$.     
\end{observation}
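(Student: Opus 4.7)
The observation packages two essentially trivial bounds: an upper bound on $\mathit{PoA}$ and a lower bound on $\mathit{PoS}$, both obtained by extremal estimates on the two coordinates of the social cost vector. I would prove the two parts independently.

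For $\mathit{PoS} \geq (1,1)$ I would argue directly from the defining property of the socially optimal placement $\pg^{\opt}$: since $\pg^{\opt}$ is the lexicographic minimizer of the social cost vector, no placement, and in particular no stable placement $\pg$, has strictly fewer unhappy agents, which gives $\RoH(\pg, \pg^{\opt}) \geq 1$. Similarly, among all placements with the minimum number of unhappy agents $\pg^{\opt}$ additionally minimizes $\mathrm{Dist}$, so $\RoD(\pg, \pg^{\opt}) \geq 1$. Taking the minimum over stable placements preserves both inequalities.

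For the $\mathit{PoA}$ upper bound I would bound each coordinate separately. For the happiness coordinate, at most $n = |A|+|B|$ agents can be unhappy in any placement, while the number of unhappy agents in $\pg^{\opt}$ is at least $0$; using the ``$+1$'' smoothing in the definition of $\RoH$ this yields $\RoH \leq \tfrac{n+1}{1} = n+1$. For the distance coordinate, each individual term $\dist_G(\fav_x, \pg(x)) + 1$ lies between $1$ (agent at its favorite) and $D(G)+1$ (agent at maximum possible distance), because shortest-path distances in $G$ are bounded above by the diameter; summing over the $n$ agents and dividing by the trivial lower bound $\mathrm{Dist}(\pg^{\opt}) \geq n$ then bounds $\RoD$ in terms of $D(G)$ as stated. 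Taking the maximum over stable placements preserves this bound.

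The only conceptual care, rather than a genuine obstacle, is to read the $\max$ and $\min$ over the vector-valued quantity $(\RoH, \RoD)$ componentwise, so that each coordinate can be treated by its own one-dimensional extremal argument. Beyond that, the proof is a routine unpacking of definitions: the happiness bound is pure counting and the distance bound is a one-line consequence of the diameter.
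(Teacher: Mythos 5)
The paper states this observation without proof, so your argument has to stand on its own, and as written it has two genuine problems, both traceable to your decision to read the $\max$/$\min$ over the vectors $(\RoH,\RoD)$ componentwise. The paper fixes the lexicographic order $\leq_{lex}$ on cost vectors, and that is the order under which the observation is true. Concretely, your claim that $\RoD(\pg,\pg^{\opt})\geq 1$ for every stable $\pg$ is false, and the paper itself exhibits a counterexample: Theorem~\ref{thm:unhappy-property}(c) constructs a stable placement whose total distance cost is strictly smaller than that of the social optimum. The reason is that $\pg^{\opt}$ minimizes the social cost lexicographically, so it only minimizes $\textrm{Dist}$ among placements that already attain the minimum number of unhappy agents; a stable placement with more unhappy agents may well have smaller $\textrm{Dist}$, and your sentence ``among all placements with the minimum number of unhappy agents $\pg^{\opt}$ additionally minimizes $\textrm{Dist}$, so $\RoD(\pg,\pg^{\opt})\geq 1$'' silently assumes that $\pg$ belongs to that restricted class. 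The correct argument is lexicographic: $\RoH(\pg,\pg^{\opt})\geq 1$ always holds because $\pg^{\opt}$ minimizes the number of unhappy agents; if $\RoH>1$ then $(\RoH,\RoD)>_{lex}(1,1)$ regardless of the second coordinate, and if $\RoH=1$ then $\pg$ does attain the minimum number of unhappy agents, whence $\textrm{Dist}(\pg)\geq\textrm{Dist}(\pg^{\opt})$ and $\RoD\geq 1$.

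For the $\mathit{PoA}$ bound, the happiness coordinate $\RoH\leq n+1$ is fine, but your distance computation does not deliver what you claim. Each term $\dist_G(\fav_x,\pg(x))+1$ lies in $[1,\,D(G)+1]$, so summing and dividing by $\textrm{Dist}(\pg^{\opt})\geq n$ gives $\RoD\leq D(G)+1$, not the stated $D(G)-1$; saying the computation bounds $\RoD$ ``as stated'' papers over an additive gap of $2$ that your argument does not close. Whether the observation's $D(G)-1$ is a slip for the trivial $D(G)+1$ or relies on an unstated refinement cannot be checked against the paper (no proof is given), but in either case your extremal estimate only establishes the weaker bound, and you should say so explicitly rather than assert the stated constant.
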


\begin{theorem}\label{thm:PoA-SSG-WD}
For the u-SSG with $\tau\leq \frac{1}{2}$, there exists a network $G$ and a stable placement $\pg^{\NE}$ such that in the worst case $\mathit{RoH}=\max\{|A|,|B|\}+1$ and in the best case $\mathit{RoH}=1$. 
\end{theorem}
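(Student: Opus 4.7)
The plan is to exhibit a single network $G$ on which two different stable placements realise the two extremes of $\mathit{RoH}$. Concretely, I would take $|A|=|B|=n$ with $n$ even and let $G$ be the graph obtained from the clique $K_n$ on vertices $c_1,\dots,c_n$ by attaching one pendant leaf $\ell_i$ to each $c_i$, giving $2n$ vertices in total. On this $G$ I want two specific placements: a ``worst-case'' stable placement $\pg^{\NE}$ that puts all $A$-agents on $c_1,\dots,c_n$ and all $B$-agents on $\ell_1,\dots,\ell_n$, and a ``balanced'' placement $\pg^{\opt}$ that assigns $A$ to $c_1,\dots,c_{n/2},\ell_1,\dots,\ell_{n/2}$ and $B$ to the remaining nodes.

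First I will verify stability of $\pg^{\NE}$ and count the unhappy agents. Each $A$-agent at $c_i$ has $n-1$ same-type clique neighbours plus the single opposite-type leaf $\ell_i$, yielding local happiness ratio $(n-1)/n \ge 1/2 \ge \tau$; hence every $A$ is happy and her u-SSG cost is $0$. Each $B$-agent at $\ell_i$ has exactly one neighbour (the $A$ at $c_i$), so her ratio is $0$ and she is unhappy. Since the u-SSG cost $\max(0,\tau-\text{ratio})$ is non-negative and equals $0$ precisely when the agent is happy, no happy agent can ever strictly decrease her cost by swapping. In particular no $A$ agrees to any swap, which rules out every cross-type swap; same-type swaps are never improving. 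So $\pg^{\NE}$ is stable and has exactly $n=\max\{|A|,|B|\}$ unhappy agents.

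Next I will check that $\pg^{\opt}$ has zero unhappy agents, so that it is socially optimal. A direct count shows each clique vertex sees $n/2-1$ same-type clique neighbours, $n/2$ opposite-type clique neighbours, and one same-type pendant, yielding ratio exactly $1/2 \ge \tau$; each leaf sees only its attached (same-type) clique vertex and has ratio $1$. Hence every agent is happy under $\pg^{\opt}$. Combining the two placements gives $\mathit{RoH}(\pg^{\NE},\pg^{\opt}) = (n+1)/(0+1) = \max\{|A|,|B|\}+1$, which realises the worst case. For the best case I simply observe that $\pg^{\opt}$ is itself a stable placement: all agents already have cost $0$, so no swap can strictly decrease anyone's cost. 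Taking $\pg=\pg^{\opt}$ then gives $\mathit{RoH}(\pg^{\opt},\pg^{\opt})=1$.

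The main obstacle is the design of the graph: it must simultaneously admit a stable placement in which an entire colour class is unhappy and a placement in which everyone is happy. The clique-plus-pendants construction is tailored for this because the pendants trap each leaf agent at a single neighbour (so a ``wrong-colour'' leaf is automatically unhappy no matter what the interior looks like), while the clique has just enough edges for a balanced half-and-half split to hit ratio exactly $1/2$. Once the graph is fixed, the stability arguments are almost immediate, relying only on the fact that in the u-SSG happy agents cannot improve by swapping because the cost is bounded below by $0$ and already equals $0$.
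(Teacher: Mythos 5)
Your proposal is correct and follows essentially the same route as the paper: exhibit a single network with one stable placement in which an entire colour class of size $\max\{|A|,|B|\}$ is unhappy and a second stable placement in which everyone is happy (which is then the social optimum), your clique-plus-pendants graph being merely a different concrete instance from the one in the paper's figure. The only ingredient the paper includes that you leave out is the matching upper bound --- that no stable placement can ever have more than $\max\{|A|,|B|\}$ unhappy agents, which follows from Theorem~\ref{thm:unhappy-property}(a) because for $\tau\leq\frac{1}{2}$ at most one type can be unhappy in a stable placement --- so if ``worst case'' is read as a supremum over all stable placements rather than as a pure existence claim, you should add that one-line citation.
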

\begin{proof}
To prove the upper bound we already know from Theorem \ref{thm:unhappy-property} that at most one type of agents can be unhappy. Thus the maximum number of unhappy agents in a placement is equal to the number of agents of the bigger group. When the optimal placement $\pg^\opt$ is stable the lower bound follows straightforwardly.

We show that the bounds are tight by an example.

Consider Fig.~\ref{fig:PoABadSSG-WD}. The green agents are happy since at least half of their neighbors are of their type. The placement is stable, since they have no incentive to swap. All blue agents are discontent. Notice that there are more blue agents than green ones. However, there is a stable placement, which is shown in Fig.~\ref{fig:PoA-OptSSG-WD} where all agents are content. The example can be easily extend to larger networks.
  
\begin{figure}[h] 
\centering
\begin{subfigure}[c]{0.49\textwidth}
\centering
\includegraphics[width=0.48\textwidth]{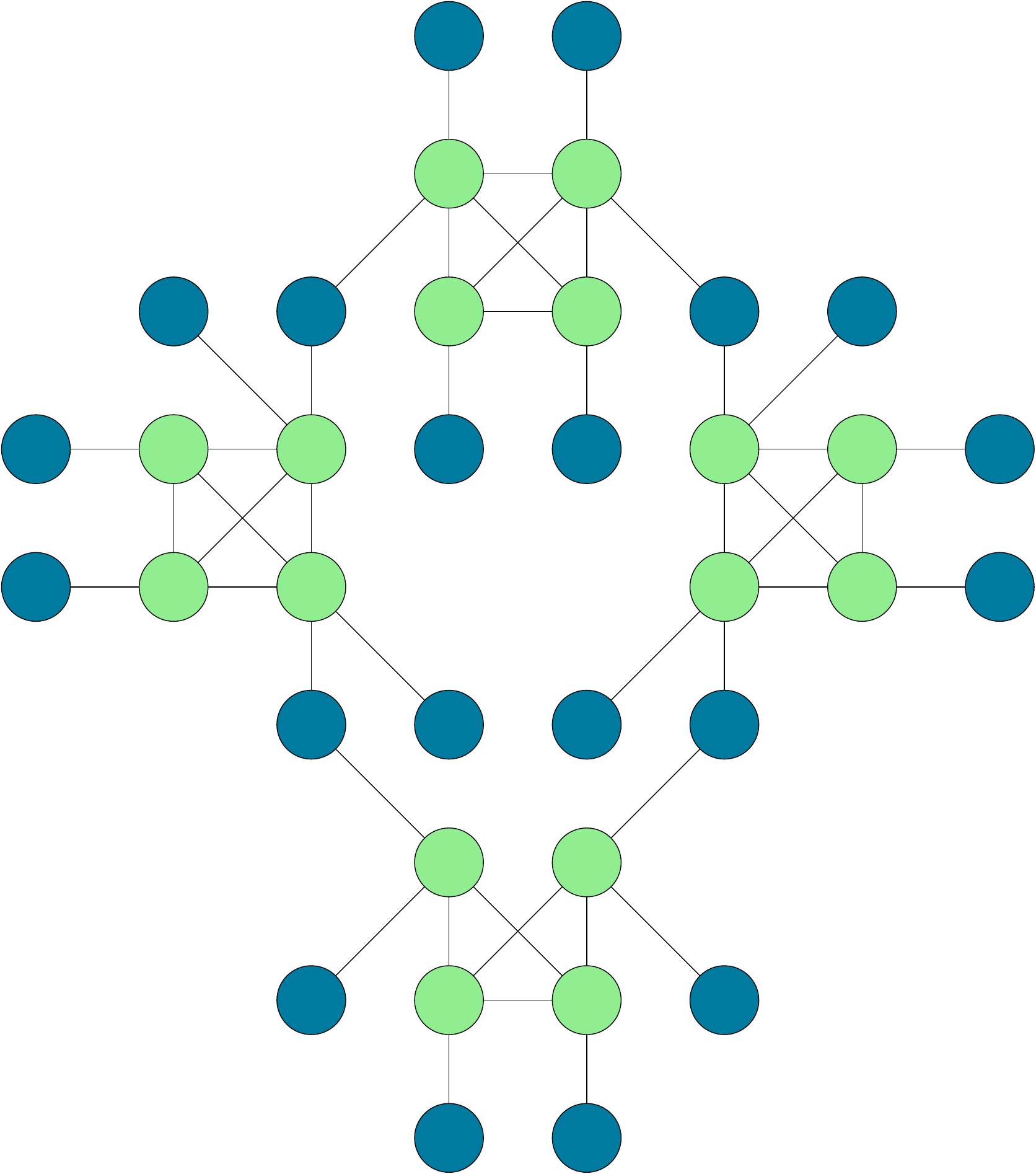}
\caption{$\mathit{RoH}=\max\{|A|,|B|\}+1$}\label{fig:PoABadSSG-WD}
\end{subfigure} 
\begin{subfigure}[c]{0.49\textwidth}
\centering
\includegraphics[width=0.48\textwidth]{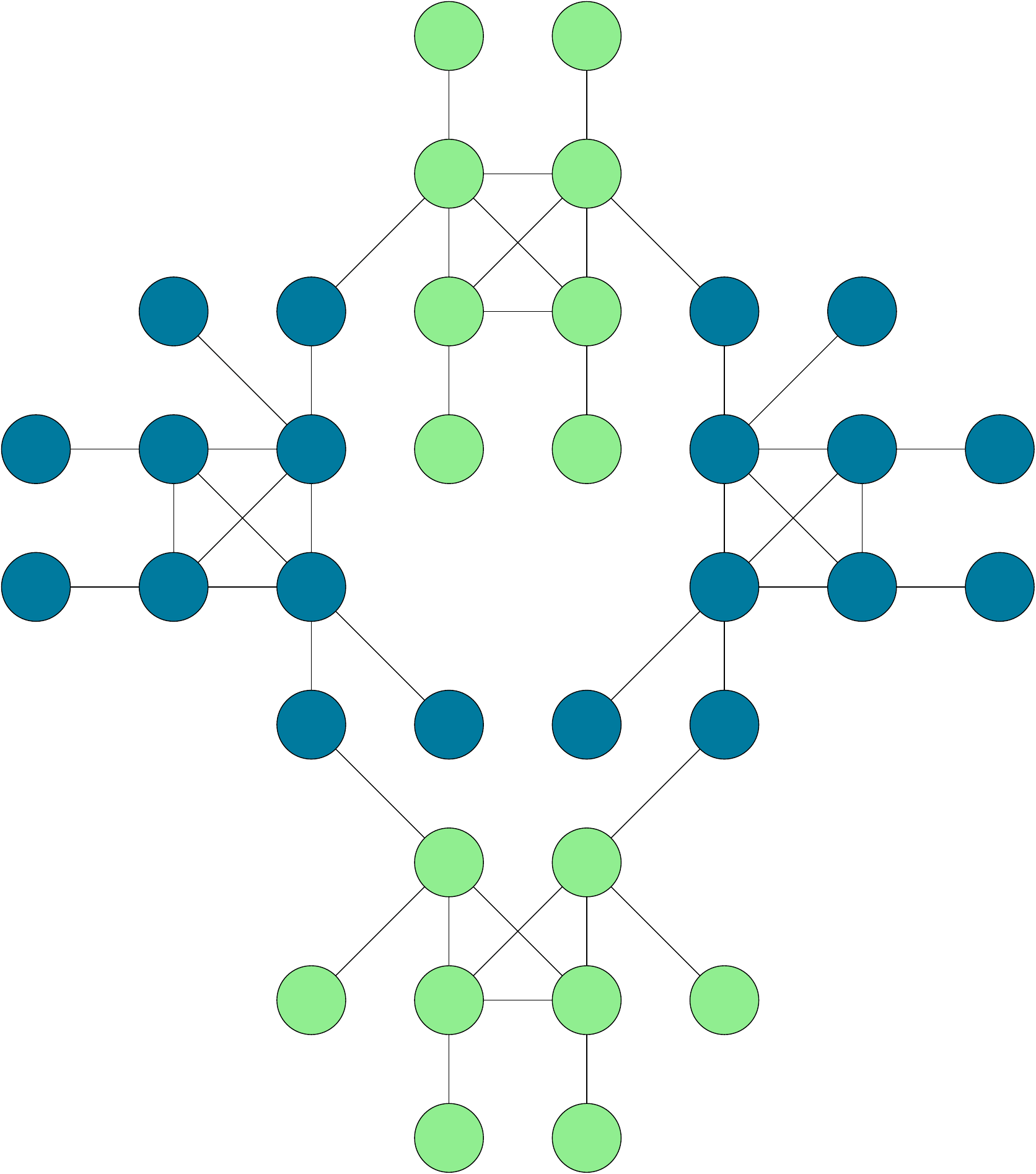}
\caption{$\mathit{RoH}=1$}\label{fig:PoA-OptSSG-WD}
\end{subfigure}
\caption{Two different stable placements in which in (a) all blue agents are unhappy whereas in (b) all agents are happy.}
\end{figure}   

\end{proof}
\noindent From Theorem \ref{thm:PoA-SSG-WD} the corollary below follows. 
\begin{corollary}
In the u-SSG there exists a network $G$ such that the $PoA=(max\{|A|,|B|\}+1,1)$ and the $PoS=(1,1)$.
\end{corollary}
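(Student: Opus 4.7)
The plan is to deduce the corollary directly from Theorem~\ref{thm:PoA-SSG-WD} by combining its two endpoints with the trivial observation that the distance component of the social cost is meaningless in the uniform setting. Concretely, I will fix the network $G$ and the two stable placements already constructed in Figures~\ref{fig:PoABadSSG-WD} and~\ref{fig:PoA-OptSSG-WD}, and show that these two placements simultaneously realize the desired $\mathit{PoA}$ and $\mathit{PoS}$.

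First I would handle the $\RoD$-component. In the u-SSG agents have no favorite nodes and the second entry of the cost vector is ignored; consequently $\mathrm{Dist}(\pg)$ plays no role for stability or social optimality, and the paper already notes that $\RoD(\pg,\tilde\pg)=1$ for every pair of placements in the u-SSG. Therefore both $\mathit{PoA}$ and $\mathit{PoS}$ have $1$ as their second coordinate on $G$, and the entire claim reduces to determining the first coordinate, namely the $\RoH$.

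Next I would fix $\pg^{\opt}$ on $G$. The good stable placement $\tilde\pg$ from Figure~\ref{fig:PoA-OptSSG-WD} has zero unhappy agents, so it is in particular a socially optimal placement; hence the first entry of $\cost(\pg^{\opt})$ equals $0$, and the denominator of $\RoH(\cdot,\pg^{\opt})$ in the definition is $0+1=1$. For the $\mathit{PoA}$ I take the bad stable placement $\pg^{\NE}$ from Figure~\ref{fig:PoABadSSG-WD}: Theorem~\ref{thm:PoA-SSG-WD} tells us that all $\max\{|A|,|B|\}$ agents of one type are unhappy, which yields $\RoH(\pg^{\NE},\pg^{\opt})=\max\{|A|,|B|\}+1$; since by Observation~\ref{obs:PoA} the $\RoH$ of any stable placement is at most $n+1$ and no stable placement on $G$ can achieve a higher value than $\pg^{\NE}$, this is indeed the maximum over $\NE$. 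For the $\mathit{PoS}$ I take $\tilde\pg$ itself, which is stable and has $\RoH(\tilde\pg,\pg^{\opt})=1$; this matches the lower bound from Observation~\ref{obs:PoA} and is therefore the minimum.

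There is no real obstacle: the only thing to be careful about is that the ``optimum'' in the denominator of $\RoH$ refers to the socially optimal placement, not to an arbitrary happy configuration, and that this optimum actually exists as a stable placement on the specific $G$ from Theorem~\ref{thm:PoA-SSG-WD}; both points are immediate from the construction in Figure~\ref{fig:PoA-OptSSG-WD}. Combining the two computations yields $\mathit{PoA}=(\max\{|A|,|B|\}+1,\,1)$ and $\mathit{PoS}=(1,1)$ on this $G$, which is exactly the statement of the corollary.
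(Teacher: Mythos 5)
Your proposal is correct and follows essentially the same route as the paper, which states the corollary as an immediate consequence of Theorem~\ref{thm:PoA-SSG-WD}: the two stable placements of Figures~\ref{fig:PoABadSSG-WD} and~\ref{fig:PoA-OptSSG-WD} give $\RoH=\max\{|A|,|B|\}+1$ and $\RoH=1$ respectively, and $\RoD=1$ trivially in the uniform setting. The only point worth noting is that the claim that no stable placement exceeds $\max\{|A|,|B|\}$ unhappy agents rests on Theorem~\ref{thm:unhappy-property}(a) (at most one type can be unhappy), not merely on the weaker $n+1$ bound of Observation~\ref{obs:PoA}, but this is exactly how the paper argues it inside the proof of Theorem~\ref{thm:PoA-SSG-WD}.
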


Now we give a tight bound for the Price of Anarchy.
\begin{theorem}\label{thm:tight_PoA}
The $SSG$ has $PoA\in (\Theta(n),\Theta(D(G)))$ for some network $G$.
\end{theorem}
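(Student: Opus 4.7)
The upper bound $\mathit{PoA} \leq (n+1, D(G)-1) = (O(n), O(D(G)))$ is immediate from Observation \ref{obs:PoA}. My plan is to establish matching lower bounds on both coordinates simultaneously by exhibiting a single family of networks $G$ with an explicit favorite-node assignment and a stable placement $\pg^{\NE}$ such that $\RoH(\pg^{\NE}, \pg^{\opt}) = \Omega(n)$ \emph{and} $\RoD(\pg^{\NE}, \pg^{\opt}) = \Omega(D(G))$ against a socially optimal placement $\pg^{\opt}$ on the same graph.

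The construction would glue two gadgets. The first is the ``happiness gadget'' underlying the proof of Theorem \ref{thm:PoA-SSG-WD}, which for the u-SSG already provides a stable placement with $\Theta(n)$ unhappy agents alongside a second stable placement in which every agent is content. I would lift this gadget to the SSG by declaring every participating agent's favorite node to be its current position in the bad placement; the distance coordinate of every such agent is then $1$ in $\pg^{\NE}$, and the lex-order argument from Theorem \ref{thm:PoA-SSG-WD} continues to rule out every intra-gadget swap. The second is a ``distance gadget'': a path of length $\Theta(D(G))$ attached to the happiness gadget at one endpoint and populated by $\Theta(n)$ happy same-type agent pairs whose favorite nodes are mirrored across the path, so that in $\pg^{\NE}$ every path-agent sits at the far end from her favorite. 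This yields $\textrm{Dist}(\pg^{\NE}) = \Theta(n \cdot D(G))$, whereas the social optimum $\pg^{\opt}$ simply swaps each pair onto their favorites and contributes only $\Theta(n)$.

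The main obstacle is verifying stability of $\pg^{\NE}$ on the combined graph, since every candidate swap must be blocked under the lexicographic cost. Intra-gadget swaps in the happiness gadget are blocked exactly as in Theorem \ref{thm:PoA-SSG-WD} because the distance coordinates are minimal there. Same-type swaps entirely on the path leave the happiness coordinate at $0$, and by the mirrored-favorite arrangement every such swap is either distance-neutral or strictly increases one party's distance, hence not improving. The delicate case is cross-gadget swaps: I would attach the path to a node of the happiness gadget surrounded by agents of a prescribed type and populate the path with agents whose type forces any cross-gadget swap to either strictly worsen the happiness coordinate of the gadget-agent (by sending her into a minority neighborhood on the path) or strictly worsen the distance coordinate of the path-agent (whose favorite lies at the opposite endpoint). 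Once all cases are ruled out, stability is established, and the two ratios $\RoH = \Theta(n)$ and $\RoD = \Theta(D(G))$ follow directly from the two gadgets, giving $\mathit{PoA} = (\Theta(n), \Theta(D(G)))$ as claimed.
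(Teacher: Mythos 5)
Your overall plan (upper bound from Observation~\ref{obs:PoA}, one explicit family of instances giving $\RoH=\Omega(n)$ and $\RoD=\Omega(D(G))$ simultaneously) matches the paper's, but your ``distance gadget'' has a genuine flaw: the placement you describe on it is not stable. Among same-type happy agents the happiness coordinate of the cost vector is $0$ before and after any swap between them (as you yourself note), so such a swap is improving whenever \emph{both} agents strictly reduce their distance to their favorites. In your mirrored arrangement, the two members of a mirrored pair sit exactly at each other's favorite nodes (or, more generally, each sits at the far end of the path from her favorite while the other sits near it); swapping them sends both to (or strictly toward) their favorites while keeping the path monochromatic, so both strictly decrease their cost lexicographically. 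Your claim that ``every such swap is either distance-neutral or strictly increases one party's distance'' is false precisely for these pairs, and more generally it is not clear that \emph{any} swap-stable arrangement of same-type happy agents on a path can sustain total distance cost $\Theta(n\cdot D(G))$ while admitting an alternative placement of cost $O(n)$: distance-improving swaps between same-type agents are never vetoed by the happiness coordinate, which is exactly what kills the ``everyone is displaced by $\Theta(D(G))$'' configurations.

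The missing idea, which is the heart of the paper's construction, is to make the agents who carry the large distance cost be of \emph{different} types from their would-be swap partners, so that the distance-reducing swap is blocked by the lexicographically dominant happiness coordinate. Concretely, the paper pairs each $b_i\in B$ (unhappy, far from her favorite $\pg(a_i)$) with an $a_i\in A$ sitting at $\fav_{b_i}$; the swap $a_i\leftrightarrow b_i$ would fix both distances, but $b_i$ would then be surrounded exclusively by type-$A$ agents, strictly worsening her first cost coordinate, so the swap is vetoed. Same-type swaps are blocked because the $a_i$'s and their favorites appear in the same order, so any such swap strictly increases one party's distance. This single integrated instance makes the $\Theta(n)$ agents of type $B$ simultaneously unhappy and displaced by $\Theta(D(G))$, yielding both lower bounds at once. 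Your happiness gadget (favorites pinned to current positions, stability inherited from the u-SSG argument) is fine, but to salvage your modular approach you would have to rebuild the distance gadget along these cross-type lines --- at which point you have essentially reconstructed the paper's proof. You should also verify, once the instance is fixed, that the lexicographically optimal placement is all-happy \emph{and} has total distance $O(n)$, since $\RoD$ is measured against that optimum.
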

\begin{proof}
Consider the network $G$ and the placement in Fig.~\ref{fig:PoABad-tight}(top). We assume that $w \in \Theta(1)$ and the number of agents of type $A$ is $$|A| = |B|+w+(w-1)\left\lceil\frac{|B|}{2}\right\rceil+w\left\lfloor\frac{|B|}{2}\right\rfloor.$$ Let $n = |A| + |B|$ be the total number of agents.
The favorite node of agent $a_i$ is $\pg(b_i)$ and of agent $b_i$ $\pg(a_i)$, respectively, with $i = 1, \ldots, |B|$. All other agents are placed at their favorite node.

The placement is stable. Agents of the same type will not swap since this would be a setback for at least one of the involved agents. Moreover there is no swap possible between $a_i$ and $b_i$ since $b_i$ would just have agents of type $A$ in her neighborhood. For $\frac{1}{2w}<\tau< 1-\frac{1}{2w}$ all agents of type $A$ are happy and all agents of type $B$ are unhappy.

The distance between $a_i$ and $b_i$ is at least $(|B|+w)\in \Omega
(D(G))$.  Therefore, the $cost(\pg)$ is 
\[cost(\pg)\in \left(|B|, |B|\cdot\Omega(D(G))\right) =\left(\Omega(n), \Omega(n \cdot D(G))\right).\]
However in the social optimum placement (cf. Fig.~\ref{fig:PoABad-tight}(bottom)) it can be observed that every agent is happy and is at her favorite location. Thus the cost of optimum is $(0,n)$.
Now we are ready to give bounds on $\RoH$ and $\RoD$. Observation~\ref{obs:PoA} together with the cost of the stable and optimum placement yields $\RoH \in\Theta(n)$ and $\RoD \in \Theta(D(G))$. Thus, the Price of Anarchy is $PoA\in(\Theta(n),\Theta(D(G)))$. 
\end{proof}
\begin{figure}[h!]
\centering
\includegraphics[width=0.9\textwidth]{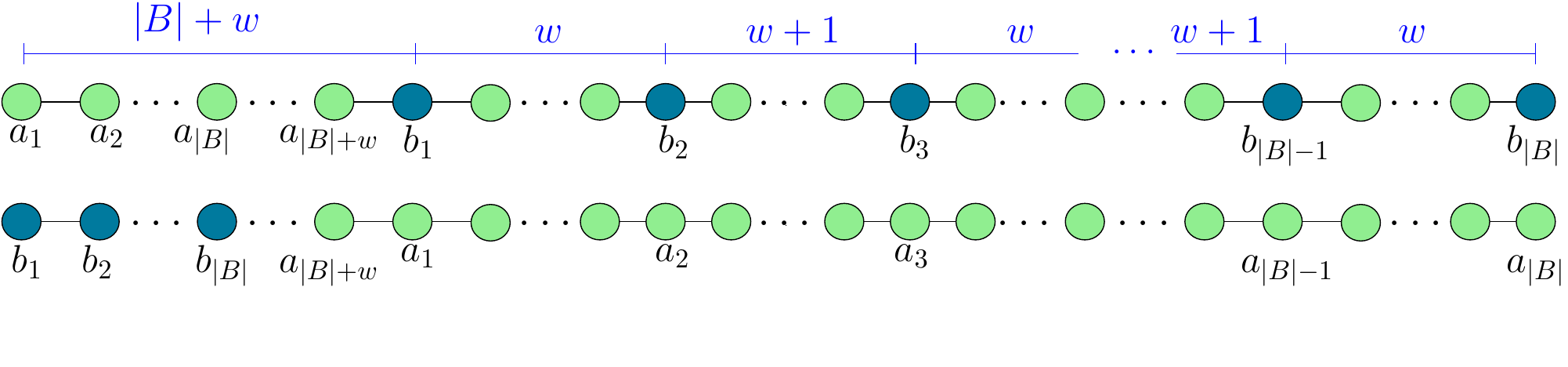}\vspace*{-0.7cm}
\caption{A stable placement (top) and the social optimum placement (bottom).}\label{fig:PoABad-tight}
\end{figure}
\begin{remark}
 The construction in the proof of Theorem~\ref{thm:tight_PoA} can be easily extended to the JSG. Thus, also for the JSG $PoA\in(\Theta(n),\Theta(D(G)))$ for some network~$G$. 
\end{remark}
\section{Conclusion and Open Questions}
In this work we have introduced the first truly game-theoretic version of the well-known Schelling segregation model. The selfish agents in our model strategically choose their locations and take the structure of their local neighborhood as well as  their individual location preferences into account. 

We have established that many variants of our model actually are potential games, which implies the existence of pure (2-coalitional) Nash equilibria and guaranteed convergence in the sequential version. However, we have also identified cases, e.g. the (cf)-JSG, which have improving response cycles. Moreover, we have investigated the efficiency of stable placements in the (u-)SSG and proved high tight bounds on the Price of Anarchy. This implies that the outcomes of our game vary significantly in their social cost. 

The most intriguing open problem is to settle the convergence behavior for the SSG for $\tau < \frac{1}{2}$ and the u-JSG. We conjecture, that a potential function exists. Moreover, it is open whether stable placements exist for the SSG and the (cf-)JSG. We conjecture that a stable placement exists for all variants. Another ambitious endeavor would be to prove bounds on the size of the monochromatic regions similar to the works~\cite{BIK12,BEL14,BEL16,BIK17}. In particular, it would be interesting to explore the impact of the location preferences on the induced stable placements. 

\bibliographystyle{abbrv}
\bibliography{Schelling}

\end{document}